\newcommand{\snew}[1]{\textcolor{black}{#1}}
\newcommand{\twu}{\textcolor{black}}
\newcommand{\markj}[1]{\textcolor{black}{#1}}
\newcommand{\sfinal}[1]{\textcolor{black}{#1}}
\newtheorem{observation}{Observation}
\begin{document}


\title{Treewidth of display graphs: bounds, brambles and applications}


\author{Remie Janssen\inst{1}, Mark Jones\inst{1}, Steven Kelk\inst{2}, Georgios Stamoulis\inst{2}, Taoyang Wu\inst{3}}

\institute{
Delft Institute for Applied Mathematics, Delft University of Technology, Netherlands. \email{remiejanssen@gmail.com, markelliotlloyd@gmail.com}
\and
Department of Data Science and Knowledge Engineering (DKE),\\ Maastricht University, P.O. Box 616, 6200 MD Maastricht, The Netherlands. \email{steven.kelk@maastrichtuniversity.nl, georgios.stamoulis@maastrichtuniversity.nl}
\and
School of Computing Sciences, University of East Anglia, United Kingdom.\\\email{Taoyang.Wu@uea.ac.uk}
}






\maketitle

\begin{abstract}
Phylogenetic trees and networks are leaf-labelled graphs used to model evolution.  Display graphs are created by identifying common leaf labels in two or more phylogenetic trees or networks.
The treewidth of such graphs is bounded as a function of many common dissimilarity measures between phylogenetic trees and this has been leveraged in fixed parameter tractability results. Here we further elucidate the properties of display graphs and their interaction with treewidth. We show that it is \textbf{NP}-hard to recognize display graphs, but that display graphs of bounded treewidth can be recognized in linear time. Next we show that if a phylogenetic network displays (i.e. topologically embeds) a phylogenetic tree, the treewidth of their display graph is bounded  \markj{by} a function of the treewidth of the original network (and \markj{also by} various other parameters). In fact, using a bramble argument we show that this treewidth bound is sharp up to an additive term of 1. We leverage this bound to give an FPT algorithm, parameterized by treewidth, for determining whether a network displays a tree, which is an intensively-studied problem in the field. We conclude with a discussion on the future use of display graphs and treewidth in phylogenetics.
\end{abstract}


\section{Introduction}\label{se:intro}
A phylogenetic tree on a set of species (or, more abstractly, \emph{taxa}) $X$ is
a tree whose leaves are bijectively labelled by $X$. The central idea of such structures is that internal nodes represent hypothetical ancestors of $X$ \cite{SempleSteel2003}. In this way, the tree can be viewed as a summary of how $X$ evolved over time. Here we focus on unrooted, binary trees: internal nodes all have degree 3, and there is no direction on the edges of the tree. This is not an onerous restriction, since many phylogenetic inference methods construct unrooted, binary trees. We refer the reader to \cite{steel2016phylogeny,felsenstein2004inferring} for further background on phylogenetics.

In this article we study \emph{display graphs}. Simply put, a display graph is obtained from two or more phylogenetic trees by identifying leaves with the same label \cite{bryant2006compatibility,vakati2011graph,kelk2015}. Display graphs have attracted interest in recent years because of the phenomenon that, if two or more phylogenetic trees are (in some formal sense) ``similar", the \emph{treewidth} of their display graph is bounded \snew{by a function of various parameters. For example, by the number of trees that form the display graph \cite{bryant2006compatibility}, or by the Tree Bisection and Reconnect (TBR) distance of two trees \cite{kelk2015,AllenSteel2001}.}

Treewidth is a well-known graph parameter which measures, at least in an algorithmic sense, how far an undirected graph is from being a tree: many \textbf{NP}-hard problems can be solved in polynomial or even linear time on graphs of bounded treewidth \cite{bodlaender1994tourist,BodlaenderK10,BodlaenderK11}. Display graphs thus form a bridge from phylogenetics into algorithmic graph theory. \snew{In particular, the bounds on the treewidth of display graphs
have been exploited to} give fixed parameter tractable algorithms for a number of \textbf{NP}-hard dissimilarity measures on phylogenetic trees \cite{bryant2006compatibility,kelk2015,baste2017efficient,fernandez2018compatibility}. (See \cite{Cygan:2015:PA:2815661} for background on fixed parameter tractability). Display graphs have also turned out to be useful for speeding up the computation of certain ``easy'' parameters on phylogenetic trees \cite{deng2018fast}, and the treewidth of the display graph itself has also been considered as a proxy for phylogenetic dissimilarity \cite{DBLP:journals/tcs/KelkSW18,grigoriev2015low}.

\snew{The purpose of this article is to further investigate, and algorithmically exploit, properties of the display graphs formed not only by trees, but also by trees and \emph{networks}. To the best of our knowledge this is the first time tree-network display graphs have been considered.}
In the first part of the article, we list some basic properties of display graphs, and then address the problem of \emph{recognizing} them, \snew{a problem posed in \cite{DBLP:journals/tcs/KelkSW18}}. Specifically: given a cubic graph $G$, do there exist two unrooted binary phylogenetic trees $T_1, T_2$ on the same set of taxa $X$ such that $G$ is the display graph $D(T_1, T_2)$ of $T_1$ and $T_2$ (after suppression of degree-2 nodes)? We prove that the problem is \textbf{NP}-hard, by providing an equivalence with the \textbf{NP}-hard \textsc{TreeArboricity} problem \cite{chang2004vertex}. On the positive side, we prove that if $G$ has bounded treewidth then this question can be answered in linear time. For this purpose we use Courcelle's Theorem  \cite{Courcelle90,Arnborg91}. This well-known meta-theorem states, essentially, that graph properties which can be expressed as a bounded-length fragment of Monadic Second Order Logic (MSOL) can be solved in linear time on graphs of bounded treewidth. We provide such an expression for recognizing display graphs.

In the second, longer part of the article, we turn our attention to display graphs formed by merging an unrooted binary phylogenetic tree $T$ with an unrooted binary phylogenetic \emph{network} $N$, both on the same set of taxa $X$. The latter is simply an undirected graph where internal nodes have degree 3 and leaves, as usual, are bijectively labelled by $X$. Unlike trees, networks do not need to be acyclic. We emphasize that unrooted phylogenetic networks (as defined here and in e.g. \cite{GBP2012,van2017unrooted,francis2018tree,solis2016inferring}) should be viewed as undirected analogues of rooted phylogenetic networks, which correspond to directed graphs \cite{HusonRuppScornavacca10}. This is to distinguish them from \emph{split} networks which are phylogenetic data-visualisation tools and which have a very different phylogenetic interpretation; these are sometimes also referred to as ``unrooted'' networks \cite{davidbook}.

\snew{Display graphs involving networks} are relevant because of the growing number of optimization problems, traditionally posed on rooted trees and networks, which are now being mapped to the unrooted setting (see e.g. \cite{keijsper2014reconstructing,van2017unrooted,huber2016transforming,francis2018tree}). We prove that, if $N$ \emph{displays} $T$ - i.e. $N$ contains a topological embedding of $T$ - the treewidth of their display graph is at most $2tw(N) + 1$, where $tw(N)$ is the treewidth of the network $N$. We also give alternative  upper bounds for the treewidth of the display graph of $N$ and $T$ expressed in terms of a parameter more familiar to the phylogenetics community. Specifically, we give (tight) bounds in terms of the \textit{level} of the original network $N$ \cite{GBP2012} (which automatically implies bounds in terms of the weaker parameter \textit{reticulation number)}. Briefly, the level of a network $N$ is simply the maximum, ranging over all biconnected components of $N$, of the number of edges in the biconnected component minus the number of edges that a spanning tree for that component has. Following \cite{kelk2015} we use these upper bounds to give a compact MSOL-based fixed-parameter tractable algorithm for the \textbf{NP}-hard problem of determining whether an unrooted network $N$ displays $T$, under various parameterizations. This problem, particularly in the rooted setting, continues to attract significant interest in the phylogenetics literature (see \cite{gunawan2016program,van2017unrooted,ISS2010b} for relevant references). The parameterization in terms of treewidth is potentially interesting since, as we point out, the treewidth of $N$ can be significantly lower than the level or reticulation number of $N$.

The question arises whether the bound $2tw(N) + 1$ can be strengthened. We show that, up to the additive $+1$ term, this bound is essentially sharp. We do this by providing an infinite family of networks $N$ with corresponding trees $T$ such that $T$ is displayed by $N$ and whereby the treewidth of the display graph is at least twice the treewidth of $N$. To derive the lower bound on treewidth we crucially use \textit{brambles} \cite{DBLP:journals/jct/SeymourT93}.

In the final part of the article we reflect on the potential future use of display graphs and treewidth in phylogenetics, and list a number of open problems.
\section{Preliminaries}
\label{sec:preliminaries}

An \textit{unrooted binary phylogenetic tree} $T$ on a set of leaf labels (known as \textit{taxa}) $X$ is an undirected tree where all internal vertices have degree three and
the leaves are bijectively labeled by $X$.
When it is understood from the context we will often drop the prefix ``unrooted binary phylogenetic'' for brevity.
Similarly, an \textit{unrooted binary phylogenetic network} $N$ on a set of leaf labels $X$ is a simple, connected, undirected graph
that has $|X|$ degree-1 vertices that are bijectively labeled by $X$ and any other vertex has degree 3. \sfinal{See Figure \ref{fig:net} for a simple example of a tree $T$ and a network $N$.}

The \textit{reticulation number} $r(N)$ of a network $N = (V,E)$ is
defined as $r(N) := |E| - (|V|-1)$, i.e., the number of edges we need to delete from $N$ in order to obtain a tree that spans $V$. A network $N$ with $r(N) = 0$ is
simply an unrooted phylogenetic tree. Note that in graph theory the value $|E| - (|V|-1)$ of a connected graph is sometimes called the \emph{cyclomatic number} of the graph \cite{diestel2010}.

For a given network $N$ we define its \emph{level}, denoted $\ell(N)$, as the minimum reticulation number ranging over all biconnected components of $N$. To be consistent with the phylogenetics literature we say that $N$ is a ``level-$k$ network'' if $\ell(N) \leq k$ (which means that they are ``almost $k$-trees'' \cite{bodlaender1998partial}). A level-0 phylogenetic network is simply a phylogenetic tree.
Many \textbf{NP}-hard problems in phylogenetics that involve phylogenetic networks as input or output can be solved in polynomial time if the network has bounded level (or bounded reticulation number) \cite{kelk2014constructing,fischer2015computing,bordewich2017fixed}. 

We now formally define the main object of study in this article, namely the \textit{display graph}:

\begin{definition}
Let $T_1 = (V_1 \cup X, E_1), T_2=(V_2 \cup X, E_2)$ be two trees, both on the same set of leaf labels $X$. The \emph{display graph} of $T_1, T_2$, denoted by
$D(T_1,T_2)$, is formed by \emph{identifying} vertices with the same leaf label and forming the disjoint union of these two trees, i.e., $D(T_1,T_2) = (V_1 \cup V_2 \cup X, E_1 \cup E_2)$.
\end{definition}

 Although the more general definition of display graph encountered in the literature allows the display graph to be formed by more than two trees, not necessarily on the same set of taxa (see e.g. \cite{bryant2006compatibility}), here we will focus exclusively on the above, more restricted definition which is enough for our purposes.
We note that, by construction, a display graph is always biconnected.

Note that a display graph is a labeled graph: the set $X$ bijectively labels the degree-2 nodes in the graph. In some parts of the article the labels $X$ and the degree-2 vertices are not important \snew{(because, modulo some trivial exceptions, degree-2 vertices do not impact upon the treewidth of a graph)}, and in such cases we work with \emph{suppressed} display graphs. Such a graph is obtained by erasing the labels $X$ and repeatedly suppressing degree-2 nodes (i.e. if $\{u,v\}$ and $\{v,w\}$ are edges and $v$ has degree-2, deleting $v$ and its two incident edges and introducing the edge $\{u,w\}$). A suppressed display graph is always cubic (when $|X| \geq 3)$. The act of suppressing degree-2 nodes can potentially create multi-edges. It is easy to see that this happens if and only if the two trees contain one or more common \emph{cherries}. A cherry is a size-2 subset of taxa $\{x, y\}$ that have a common parent, and a cherry is common on two trees if it exists in both of them.

The definition of a display graph formed by a tree $T$ and a network $N$, both on $X$, is completely analogous to the definition for two trees, and is denoted as $D(N,T)$.

Let $N$ be a phylogenetic network and $T$ a phylogenetic tree, both on a common taxon set $X$. Then we say that $N$ \textit{displays} $T$ (or $T$ is displayed by $N$) if there exists a \textit{subtree} $N'$ of $N$ that is a
\emph{subdivision} of $T$ i.e., $T$ can be obtained by a series of edge contractions on a subgraph $N'$ of $N$. We say that $N'$ is an \emph{image} of $T$.  We observe that every vertex of $T$ is mapped to a vertex of $N'$, and that edges of $T$
map to paths in $N'$ (perhaps consisting of only a single edge) leading us to the following observation (see also \cite{bryant2006compatibility}):

\begin{observation}
\label{obs:display}
If an unrooted binary phylogenetic network $N$ displays  an unrooted binary phylogenetic tree $T$, both on the same set of leaf labels $X$, then there exists \twu{ a subtree $N'$ of $N$ and a surjective \sfinal{function $f$} from $N'$ to
$T$ such that: }
\begin{enumerate}
\item[(1)] $f(\ell) = \ell, \forall \ell \in X$,
\item[(2)]
 the subsets of $V(N')$ induced by $f^{-1}$ are mutually disjoint, and each such subset induces a connected subtree of $V(N')$,
$\forall v \in V(T)$, the set $\{ u \in V(N'): f(u) = v \}$ forms a connected component in $N$,
and
\item[(3)] $\forall \{ u,v \} \in E(T), \exists_1 \{ \alpha, \beta \} \in E(N'):$ $f(\alpha) = u$ and $f(\beta) = v$.
\end{enumerate}
\end{observation}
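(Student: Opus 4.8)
The plan is to unwind the definition of ``displays'' given just above the statement. By assumption there is a subtree $N'$ of $N$ which is a subdivision of $T$; recall that this means $V(T) \subseteq V(N')$, and for every edge $e = \{u,v\} \in E(T)$ there is a path $P_e$ in $N'$ joining $u$ and $v$ whose internal vertices all have degree $2$ in $N'$, where the paths $P_e$ are internally vertex-disjoint and collectively cover $V(N')$ and $E(N')$. In particular every leaf $\ell \in X$, being a degree-$1$ vertex of $T$, survives as a degree-$1$ vertex of $N'$ carrying the same label.

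Next I would construct the map $f : V(N') \to V(T)$ directly. For an original vertex $w \in V(T) \subseteq V(N')$ set $f(w) = w$; this already yields surjectivity and property~(1). For the internal vertices of a path $P_e$ with $e = \{u,v\}$, write $P_e = (u = w_0, w_1, \dots, w_k = v)$, choose one edge $\epsilon_e = \{w_j, w_{j+1}\}$ of $P_e$ arbitrarily (such an edge exists since $P_e$ has length at least $1$), and put $f(w_i) = u$ for $i \le j$ and $f(w_i) = v$ for $i \ge j+1$. The fibres $f^{-1}(v)$, $v \in V(T)$, then partition $V(N')$ by construction.

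The remaining verifications are routine. For property~(2): $f^{-1}(v)$ consists of $v$ together with, for each edge $e$ of $T$ incident to $v$, the initial segment of $P_e$ up to the endpoint of $\epsilon_e$ on the $v$-side; this is a union of paths all sharing the vertex $v$, hence a connected subgraph of $N'$, and therefore a subtree since $N'$ is a tree. For property~(3): any edge $\{\alpha,\beta\}$ of $N'$ lies on a unique path $P_{e'}$, and its endpoints are mapped into the (at most two) endpoints of $e'$; so an edge of $N'$ can have one endpoint sent to $u$ and the other to $v$ only if $e' = e$, and within $P_e$ the only such edge is $\epsilon_e$, by the choice made above. Hence exactly one edge of $N'$ maps ``across'' $e$, which is the $\exists_1$ assertion.

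The only point requiring a little care --- and the closest thing to an obstacle --- is the choice of the break edge $\epsilon_e$ on each subdivision path: an arbitrary assignment of the internal path vertices to $\{u,v\}$ would keep the fibres connected but could create several edges of $N'$ straddling $e$, violating the uniqueness in~(3); insisting on a single break point on each path is exactly what makes (2) and (3) hold simultaneously. Everything else is an immediate consequence of the definition of a subdivision.
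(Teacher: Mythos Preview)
Your proof is correct and matches the paper's treatment. The paper does not give a formal proof of this observation at all: it is stated as following directly from the definition of a subdivision (with a reference to \cite{bryant2006compatibility}), and the only elaboration appears later in the proof of Lemma~\ref{lem:twbound}, where the authors describe $f$ informally in exactly the way you construct it---taxa and degree-$3$ vertices of $N'$ map to the corresponding vertices of $T$, and each degree-$2$ vertex on a subdivision path is sent to one of the two endpoints of the corresponding edge of $T$. Your explicit choice of a single ``break edge'' $\epsilon_e$ on each path $P_e$ is precisely what the paper leaves implicit in the phrase ``depending on how exactly the surjection was constructed,'' and your remark that this single break is what simultaneously secures connectedness of the fibres and uniqueness in~(3) is the right point to highlight.
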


This observation will be crucial when we study the treewidth of $D(N,T)$ as a function of several parameters (including the treewidth) of $N$.

We now move on to define the concept of the \emph{treewidth} of an undirected graph:

\begin{definition}
Given an undirected graph $G=(V,E)$, a \emph{tree decomposition} of $G$ is a pair $(\mathcal{B}, \mathbb{T})$ where $\mathcal{B} = \{B_1, \dots ,B_q\}$  is a multiset of \emph{bags}
and $\mathbb{T}$ is a tree whose $q$ nodes are in bijection with $\mathcal{B}$, satisfying the following three properties:

\begin{itemize}
\item[(tw1)] $\cup_{i=1}^q B_i = V(G)$;
\item[(tw2)] $\forall e = \{ u,v \} \in E(G), \exists B_i \in \mathcal{B} \mbox{ s.t. } \{u,v\} \subseteq B_i$;
\item[(tw3)] \emph{running intersection property:} $\forall v \in V(G)$ all the bags $B_i$ that contain $v$ form a connected subtree of $\mathbb{T}$.
\end{itemize}

The \emph{width} of $(\mathcal{B}, \mathbb{T})$ is equal to $\max_{i=1}^q |B_i|-1$. The \emph{treewidth} of $G$, denoted by $tw(G)$, is the smallest width among all possible tree decompositions of $G$.  A tree decomposition $\mathcal{T}$ achieving the smallest possible width for a given graph $G$ is called optimal.
\end{definition}
\sfinal{If an undirected graph $H$ can be obtained from a graph $G$
by deleting vertices and edges and contracting edges, then $H$ is a \emph{minor} of $G$. It is well known that, if $H$ is a \emph{minor} of a graph $G$, then $tw(H) \leq tw(G)$ \cite{diestel2010}.}

In \cite{DBLP:journals/tcs/KelkSW18} it was shown that the treewidth of the display graph of two trees can be, in the worst case, linear in the number of the vertices in the trees. In this article we will explore the relation of the treewidth  of a display graph formed by a phylogenetic network and a tree displayed by that network, and the treewidth (or other parameters) of the network itself.

Finally, we define the \emph{bramble} parameter of a graph, a parameter closely related to treewidth that is very useful when proving lower bounds on treewidth. Given a graph $G$ and two subgraphs $S_1,S_2$ of it, we say that $S_1$ and $S_2$ \emph{touch} if $V(S_1) \cap V(S_2) \neq \emptyset$, \textit{or} some edge of $G$ has one endpoint in $S_1$ and the other in $S_2$. A \emph{bramble} $B$ of $G$ is a set of connected subgraphs of $G$ that pairwise touch. A (sub)set $H \subseteq V(G)$ is a \emph{hitting set} of a bramble $B$ of $G$ if $H$ intersects every element of $B$. The \emph{order} of $B$ is the minimum size of such a hitting set and the bramble number of $G$, denoted by $br(G)$, is the maximum, among all possible brambles, order of a bramble of $G$. The usefulness of brambles comes from the following result, due to Seymour \& Thomas, relating the treewidth of a graph $G$ to its bramble number:

\begin{theorem}[\cite{DBLP:journals/jct/SeymourT93}]
For any graph $G$ we have that $tw(G) = br(G) -1$.
\end{theorem}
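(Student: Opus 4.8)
The plan is to establish the two inequalities $br(G) \le tw(G)+1$ and $br(G) \ge tw(G)+1$ separately; their conjunction is exactly $tw(G) = br(G)-1$. For the first inequality I would fix an optimal tree decomposition $(\mathcal{B},\mathbb{T})$ of $G$ and an arbitrary bramble $\mathcal{S} = \{S_1,\dots,S_m\}$ of $G$, and prove that some single bag $B_i$ is already a hitting set of $\mathcal{S}$; this gives $\mathrm{order}(\mathcal{S}) \le \max_i |B_i| = tw(G)+1$, and ranging over all brambles yields $br(G) \le tw(G)+1$. To locate such a bag, for each $j$ let $\mathbb{T}_j$ be the set of nodes of $\mathbb{T}$ whose bag meets $V(S_j)$; using (tw2) and (tw3) together with the connectedness of $S_j$, one checks in the standard way that $\mathbb{T}_j$ induces a subtree of $\mathbb{T}$. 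For any two indices $j,j'$, the fact that $S_j$ and $S_{j'}$ touch means they share a vertex or are joined by an edge of $G$, and in either case (tw2) forces some bag to meet both $V(S_j)$ and $V(S_{j'})$, so $\mathbb{T}_j \cap \mathbb{T}_{j'} \neq \emptyset$. Hence $\mathbb{T}_1,\dots,\mathbb{T}_m$ are pairwise-intersecting subtrees of the tree $\mathbb{T}$, so by the Helly property for subtrees of a tree they have a common node $t^\star$, and then $B_{t^\star}$ meets every $V(S_j)$.

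For the reverse inequality, set $k = br(G)$, so that every bramble of $G$ has order at most $k$; the goal is a tree decomposition of $G$ of width at most $k-1$. I would obtain it by an extremal argument over \emph{partial} tree decompositions: tree-indexed families of bags of size at most $k$ that satisfy the running-intersection property and, as in a genuine tree decomposition, intersect along each edge of the index tree in a separator of $G$, but which cover only part of $V(G)$. Among all such partial decompositions covering a maximal portion of $V(G)$ (with respect to a suitable well-ordering of the ``uncovered part''), choose one; if it is not already a full tree decomposition, there is an uncovered region $W$ attached to the rest through a single bag $B$, and one argues that either $W$ can be merged in or refined, contradicting maximality, or else $W$ together with $B$ is so tightly linked that $G$ carries a bramble of order greater than $k$ --- equivalently, a haven of order $k+1$ --- contradicting the choice of $k$. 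Iterating drives the uncovered part to zero and delivers the required tree decomposition.

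The hard part will be the final step of the second half: showing that an irreducible uncovered region forces a bramble (or haven) of order exceeding $k$. This is the genuine content of the Seymour--Thomas duality \cite{DBLP:journals/jct/SeymourT93} and of its later streamlined proofs; the first inequality, and the bookkeeping of the extremal argument, are comparatively routine. Combining the two inequalities completes the proof.
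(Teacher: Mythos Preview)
The paper does not prove this theorem; it is quoted as a known result due to Seymour and Thomas \cite{DBLP:journals/jct/SeymourT93} and is used only as a tool to obtain the treewidth lower bound in Section~\ref{sec:lb}. There is therefore no ``paper's own proof'' to compare your attempt against.

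On the substance of your outline: the easy direction $br(G)\le tw(G)+1$ via the Helly property of subtrees is correct and standard. For the converse, however, your sketch is not a proof but a description of where the difficulty lies --- you yourself write that the decisive step ``is the genuine content of the Seymour--Thomas duality.'' Since the statement you are asked to prove \emph{is} the Seymour--Thomas duality, deferring the hard implication back to \cite{DBLP:journals/jct/SeymourT93} is circular. If you intend to give a self-contained proof, you would need to actually carry out the extremal/haven argument (for instance along the lines of the short proof by Bellenbaum and Diestel), not merely announce that it can be done.
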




\section{Recognizing display graphs of pairs of trees}
\label{sec:recognition}

We consider the \textsc{DisplayGraph} decision problem, posed in \cite{DBLP:journals/tcs/KelkSW18}: 

\medskip
\noindent
\markj{\textbf{Input:} A biconnected, cubic, simple graph $G=(V,E)$.}

\noindent\markj{\textbf{Goal} Find two unrooted binary trees $T_1, T_2$, on the same set of taxa $X$, such that the suppressed display graph $D(T_1,T_2)$ of these two trees is isomorphic to $G$, if they exist.}

\smallskip
\markj{Note that in this formulation we can assume without any loss of generality that $T_1$ and $T_2$ do not have common cherries.}

Here we will argue that the \textsc{DisplayGraph} problem is \textbf{NP}-hard by providing an equivalence between the \textsc{DisplayGraph} problem and the \textbf{NP}-hard \textsc{TreeArboricity} problem \cite{chang2004vertex} which is defined as follows:

\medskip
\noindent
\textbf{Input:} A simple, undirected graph $G=(V,E)$.

\noindent\textbf{Goal} Find the smallest positive integer $k$ such that there exists a partition $(V_1,\ldots, V_k)$ of $V$ such that each part of the partition induces a tree, i.e., $G|_{V_i}$ is a tree for $i\in[k]$ (such a partition is called a \emph{tree partition}). This $k$ is the \emph{Tree Arboricity} of $G$, also denoted as $ta(G)$.

\smallskip
We emphasize that unlike some closely related variants of the problem (for example \textsc{VertexArboricity} \cite{raspaud2008vertex}), it is not permitted that a $G|_{V_i}$ induces a forest consisting of two or more components.

\sfinal{Chang et al. \cite{chang2004vertex} discuss the decision version of the \textsc{TreeArboricity} problem with $k=2$ (i.e. is $ta(G) \leq 2$?). The following lemma binds their problem to ours.}

\begin{lemma}\label{lem:equivalence}
Given a \snew{simple, connected, cubic} graph $G$ \snew{as} input to the \textsc{TreeArboricity} \snew{decision} problem, $G$ is a ``yes" instance for the \textsc{TreeArboricity} problem with $k=2$ if and only if $G$ is a \snew{suppressed} display graph $D(T_1,T_2)$ of two binary phylogenetic trees $T_1,T_2$ on a common set of taxa $X$.
\end{lemma}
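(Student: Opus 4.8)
The plan is to prove both directions of the equivalence by translating between a tree partition $(V_1, V_2)$ of a cubic graph $G$ and a pair of phylogenetic trees $T_1, T_2$ whose suppressed display graph is $G$. The key structural observation is that in a suppressed display graph $D(T_1,T_2)$ every original vertex of $T_1$ (other than leaves) has degree $3$ and all such vertices survive suppression, so the vertex set of $G$ splits naturally into ``$T_1$-vertices'' and ``$T_2$-vertices,'' each inducing (essentially) a subdivision of the corresponding tree with leaves removed, and the leaves $X$ are exactly the degree-$2$ vertices that get suppressed. So I would first carefully set up this correspondence at the level of vertex sets.

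For the forward direction, suppose $G$ is cubic and $ta(G)\le 2$, so $V(G) = V_1 \cup V_2$ with $G|_{V_1}$ and $G|_{V_2}$ both trees. I would show each $G|_{V_i}$, being a tree whose vertices have degree at most $3$ in $G$, can be turned into an unrooted binary phylogenetic tree: the edges of $G$ between $V_1$ and $V_2$ are the ``cut edges'', and I attach a leaf to each endpoint of such an edge. The crucial counting step is that a vertex $v \in V_1$ has degree $3$ in $G$ and degree $d_i(v)$ inside $G|_{V_1}$, so it has $3 - d_i(v)$ cut edges incident to it; I need every internal vertex of the resulting tree to have degree exactly $3$, i.e.\ $d_i(v) + (\text{leaves attached at } v) = 3$. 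By subdividing each cut edge $\{u,v\}$ (with $u\in V_1$, $v \in V_2$) with a new vertex $x_{uv}$, labelling it, and connecting $x_{uv}$ to both sides, I obtain a common leaf label set $X$ indexed by the cut edges; the leaves pendant to $G|_{V_1}$ and to $G|_{V_2}$ are identified precisely along $X$. Then $T_1, T_2$ are binary phylogenetic trees on $X$, and suppressing the degree-$2$ vertices $X$ in $D(T_1,T_2)$ recovers exactly $G$ (here I invoke the assumption, which we may make WLOG, that $T_1,T_2$ share no common cherry, so no multi-edges are created). I should also handle the degenerate small cases (e.g.\ $|X|<3$) separately or note they are excluded.

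For the reverse direction, suppose $G = D(T_1,T_2)$ is a suppressed display graph. Before suppression, the display graph $D(T_1,T_2)$ has vertex set $V(T_1) \cup V(T_2) \cup X$ where the leaves $X$ are shared; the internal vertices of $T_1$ form a set inducing a tree (namely $T_1$ minus its leaves, which is connected since $T_1$ is a tree and $|X|\ge 3$), likewise for $T_2$, and these two internal-vertex sets partition $V(G)$ after the leaves are suppressed onto the edges. Thus setting $V_i = $ (internal vertices of $T_i$, as they appear in $G$) gives a partition of $V(G)$, and $G|_{V_i}$ is exactly $T_i$ with leaves suppressed — a tree (a path or a point if degenerate, but in general a tree with all degrees $\le 3$). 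Hence $ta(G) \le 2$. I would double-check that suppression does not merge a $V_1$-vertex with a $V_2$-vertex (it cannot: suppressed vertices are precisely the degree-$2$ labelled leaves, all in $X$) and that no edge of $G$ lies with both endpoints spuriously — i.e.\ that $G|_{V_i}$ picks up no extra edges from the identification; an edge of $G$ is either an edge internal to one $T_i$ or arises from a path through a suppressed leaf, and the latter connects a $V_1$-vertex to a $V_2$-vertex, so it does not appear in either $G|_{V_i}$.

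The main obstacle I anticipate is the bookkeeping around degree-$2$ vertices and small/degenerate instances: making sure that the trees produced in the forward direction are genuinely \emph{binary} phylogenetic trees (all internal degrees exactly $3$, all leaves labelled, bijection with $X$), that suppression really returns $G$ rather than a multigraph or a graph with stray degree-$2$ vertices, and conversely that ``reading off'' the partition from $D(T_1,T_2)$ is robust under the suppression operation. The cubic hypothesis on $G$ is what makes the degree arithmetic $d_i(v) + (3 - d_i(v)) = 3$ work out, and the no-common-cherry assumption is what rules out multi-edges; I would make both of these explicit at the points where they are used.
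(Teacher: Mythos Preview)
Your proposal is correct and follows essentially the same approach as the paper: in one direction take $V_i$ to be the internal vertices of $T_i$, and in the other subdivide each crossing edge with a new taxon and use the cubic hypothesis to verify the degree arithmetic (the paper phrases this via the ``missing degree'' $\mu(v)=3-\deg_{G|_{V_i}}(v)$ and the observation that $\sum_{V_1}\mu=\sum_{V_2}\mu$, which is your $3-d_i(v)$ count of cut edges). Your treatment is in fact more careful than the paper's about suppression bookkeeping and the no-common-cherry assumption, but the underlying argument is identical.
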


\begin{proof}
Given such $T_1, T_2$ then the partition of the set of vertices into two sets $V_1, V_2$ is simply $V_i = V(T_i) \setminus X$. We exclude the taxa $X$ since, when we form the display graph $D(T_1,T_2)$, these will become degree-2 vertices which are subsequently suppressed. On the other hand, given a bipartition $V_1,V_2$ of $G$, we can form the two phylogenetic trees $T_1, T_2$ on a common set of taxa $X$ whose display graph is isomorphic to $G$ as follows. First of all, by definition, $G|_{V_1}, G|_{V_2}$ are trees. Since $G$ is connected and cubic, every leaf vertex $v$ in one bipartition, say $G|_{V_1}$, has exactly 2 neighbor vertices $u_1, u_2$ in $G|_{V_2}$ (i.e., $\{u_1, u_2\} \subseteq V_2$). 
\markj{Subdivide each of the edges $\{v, u_1\}, \{v,u_2\}$ with a new vertex in $X$ (i.e., for $i=1,2$, replace edge $\{v,u_i\}$ with the two edges $\{v,w\}, \{w,u_i\}$, where $w$ is a newly introduced vertex, and include $w \in X$  which is initially empty).}
The points of subdivisions of these ``crossing" edges (having one vertex in each bipartition) are the taxa $X$ of the new trees. Repeat the process on the remaining leaf vertices from $G|_{V_2}$. The same argumentation will also take care of the remaining degree-2 vertices in each of $G|_{V_1}$ and $G|_{V_2}$. To complete the proof, we need to show that the number of the degree-1 plus the degree-2 vertices in $G|_{V_1}, G|_{V_2}$ are equal, such that the two constructed trees are binary phylogenetic trees. Indeed, this \sfinal{will follow} because $G$ is cubic and connected and a ``yes" instance to the \textsc{TreeArboricity} problem. \sfinal{Specifically}, each edge not entirely in $G|_{V_i}$ must have one endpoint in each bipartition. Thus, if we define for every vertex $v \in V_i$ its ``missing" degree in each tree as \sfinal{$\mu(v) = 3 - \mathrm{deg}(v)$} \sfinal{(where here $\mathrm{deg}(v)$ refers to the degree of $v$ in $G|_{V_i}$)}, then we see that $\sum_{v \in V_1} \mu(v) = \sum_{u \in V_2} \mu(u)$ i.e., both constructed trees $T_1,T_2$ are binary and, by construction, on the same set of taxa $X$.
\end{proof}

\begin{theorem}
\textsc{DisplayGraph} is \textbf{NP}-complete.
\end{theorem}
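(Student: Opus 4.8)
The plan is to prove the two directions of \textbf{NP}-completeness separately, with Lemma~\ref{lem:equivalence} doing the heavy lifting for the hardness part.

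For membership in \textbf{NP}, I would take as a certificate for a (biconnected, cubic) input $G$ a partition $(V_1,V_2)$ of $V(G)$, and have the verifier check in polynomial time that $G|_{V_1}$ and $G|_{V_2}$ are both connected and acyclic. Since $G$ is connected and cubic, the ``if'' direction of Lemma~\ref{lem:equivalence} says that such a partition exists precisely when $G$ is the suppressed display graph of two binary phylogenetic trees on a common taxon set; hence \textsc{DisplayGraph} lies in \textbf{NP}. (Guessing $T_1,T_2$ directly also works, since each has $O(|E(G)|)$ vertices, but the bipartition is tidier.)

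For \textbf{NP}-hardness I would reduce from the problem ``is $ta(G)\le 2$?'', which is \textbf{NP}-hard by Chang et al.~\cite{chang2004vertex}. By Lemma~\ref{lem:equivalence} this problem coincides with \textsc{DisplayGraph} on connected cubic graphs, so the reduction is essentially the identity map. The only wrinkle is that \textsc{DisplayGraph} insists on a \emph{biconnected} input, so I would first observe that a connected cubic graph $G$ with $ta(G)\le 2$ is automatically biconnected: in any tree $2$-partition a bridge $uw$ would force one of the two components of $G-uw$ to lie entirely within a single part and hence to induce a forest, which is impossible because removing a bridge from a cubic graph leaves a component on $m$ vertices in which all but one vertex has degree $3$, and such a component has more than $m-1$ edges; moreover, in a cubic graph every cut vertex creates a bridge. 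Thus cubic non-biconnected graphs are ``no''-instances on both sides, so the reduction can test biconnectivity in polynomial time, pass $G$ through unchanged when the test succeeds, and otherwise output a fixed biconnected cubic ``no''-instance (one exists, since not every cubic graph has tree arboricity $2$). Together with the ``only if'' direction of Lemma~\ref{lem:equivalence}, this establishes \textbf{NP}-hardness and hence the theorem.

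The step I expect to be the real obstacle is ensuring that the \textbf{NP}-hardness of ``is $ta(G)\le 2$?'' survives the restriction to \emph{cubic} graphs. If the construction of \cite{chang2004vertex} already produces cubic (or bounded-degree) instances, the argument above is complete; otherwise it must be preceded by a degree-reduction step in which each vertex of degree $d\neq 3$ is replaced by a gadget with $d$ ports that behaves, in every tree $2$-partition, like a single vertex --- low degrees padded with pendant subtrees, high degrees expanded into suitably connected tree gadgets --- and one must then check that this transformation neither introduces a spurious split nor destroys a genuine one. Everything else is routine bookkeeping.
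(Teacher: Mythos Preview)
Your proof is correct but works harder than necessary on the hardness side. The paper's proof is the same reduction (identity map, via Lemma~\ref{lem:equivalence}), but it sidesteps both of your worries in one stroke: Chang et al.~\cite{chang2004vertex} prove that deciding $ta(G)\le 2$ is \textbf{NP}-complete already on \emph{simple, cubic, 3-connected planar} graphs. Since 3-connected implies biconnected, every hard instance of \textsc{TreeArboricity} is automatically a valid \textsc{DisplayGraph} input, so neither your biconnectivity argument nor any degree-reduction gadget is needed. Your anticipated ``real obstacle'' therefore evaporates.

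For membership in \textbf{NP}, your certificate (the bipartition $(V_1,V_2)$) is in fact cleaner than the paper's: the paper guesses $T_1,T_2$ directly and then appeals to polynomial-time graph isomorphism for bounded-degree graphs to compare the suppressed $D(T_1,T_2)$ with $G$. Your route avoids the isomorphism detour entirely by leaning on Lemma~\ref{lem:equivalence}.
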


\begin{proof}
The \textsc{DisplayGraph} problem is easily seen to be in \textbf{NP}: a certificate can be the two trees $T_1,T_2$ that form the graph $G$. We only need to check that $D(T_1,T_2)$, after suppressing degree-2 vertices, is isomorphic to $G$, something that can be done in polynomial time since the graph isomorphism problem is polynomially time solvable for graphs of bounded degree \cite{DBLP:journals/jcss/Luks82,DBLP:journals/corr/abs-1802-04659}.
For hardness, \cite{chang2004vertex} prove that the decision version of the \textsc{TreeArboricity} problem with $k=2$ is \textbf{NP}-complete when restricted to a simple, cubic, 3-connected planar graphs.
Thus, let $G$ be a simple, cubic, 3-connected planar graph that is input to the \textsc{TreeArboricity} problem. A 3-connected graph is vacuously also a biconnected graph, so $G$ is a valid input to the \textsc{DisplayGraph} problem. The result follows because of the if and only if relationship described in Lemma \ref{lem:equivalence}.
\end{proof}

\subsection{The fixed parameter tractability of recognizing display graphs of bounded treewidth}

Let $G = (V,E)$ be a simple, biconnected cubic graph. We will use Courcelle's Theorem to test whether $G$ is a suppressed display graph. This will show that the question can be settled in time $O( f(tw(G)) \cdot |V|)$ where $f$ is a function that depends only on the treewidth of $G$. Specifically, when $G$ has bounded treewidth this will yield a linear time algorithm. The constant-length MSOL formulation simply tests whether $ta(G) \leq 2$. \snew{(Clearly, $ta(G) \geq 2$ because $G$ is not acyclic). The MSOL formulation (and an introduction to MSOL proofs) is given in the appendix.}

\begin{theorem}
Suppressed display graphs can be recognized in linear time on graphs of bounded treewidth.
\end{theorem}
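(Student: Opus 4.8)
The plan is to invoke Courcelle's Theorem, so the task reduces to (i) writing down a constant-length MSOL sentence $\varphi$ over the vocabulary of graphs (with quantification over vertices, edges, and sets thereof) such that $G \models \varphi$ if and only if $G$ is a suppressed display graph, and (ii) observing that, by Lemma~\ref{lem:equivalence}, for the relevant input class ``$G$ is a suppressed display graph'' is equivalent to ``$ta(G) \le 2$'', which in turn (since $G$ is cubic and not acyclic) is equivalent to ``$ta(G) = 2$''. Given $\varphi$ of fixed length, Courcelle's Theorem yields an algorithm running in time $O(f(tw(G)) \cdot |V|)$, which is linear on any class of bounded treewidth; this is exactly the claimed conclusion.

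First I would recall the standing assumptions carried over from the section: the input $G$ is a simple, biconnected, cubic graph, so in particular the hypotheses of Lemma~\ref{lem:equivalence} (simple, connected, cubic) are met, and $G$ contains a cycle so $ta(G) \ge 2$. Hence it suffices to express ``$ta(G) \le 2$'': there is a set $V_1 \subseteq V$ (with $V_2 := V \setminus V_1$) such that both $G|_{V_1}$ and $G|_{V_2}$ induce trees. ``Induces a forest'' (acyclicity of the induced subgraph) is a classical MSOL-expressible property --- e.g.\ ``there is no nonempty set of vertices contained in $V_i$ in which every vertex has at least two neighbours inside the set'', or equivalently the standard ``every nonempty subset has a vertex of degree $\le 1$ in the induced subgraph'' formulation. ``Connected'' is likewise MSOL-expressible: $G|_{V_i}$ is connected iff there is no partition of $V_i$ into two nonempty parts with no edge of $G$ between them. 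Conjoining acyclicity and connectedness of both $G|_{V_1}$ and $G|_{V_2}$, and existentially quantifying over $V_1$, gives a sentence of fixed (constant) length that is true exactly when $ta(G)=2$, i.e.\ exactly when $G$ is a suppressed display graph of two binary phylogenetic trees. The detailed formula, together with a short primer on such proofs, I would relegate to the appendix as already announced.

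The only genuine subtlety --- and the step I expect to require the most care --- is making sure the MSOL sentence captures \emph{tree} partitions and not merely \emph{forest} partitions: as the paper explicitly stresses (contrasting \textsc{TreeArboricity} with \textsc{VertexArboricity}), each $G|_{V_i}$ must be a single tree, so the connectivity clause is essential and cannot be dropped. One must also double-check the degenerate boundary: $V_1 = \emptyset$ would make $G|_{V_1}$ the empty graph, which should not count as a tree, but for cubic $G$ with $|X| \ge 3$ a valid bipartition has both sides nonempty anyway, and this can either be enforced by an explicit nonemptiness conjunct or is automatically excluded since an empty part forces the other part to be all of $G$, which is not a tree because $G$ has a cycle. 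Beyond that, the argument is essentially a bookkeeping exercise: verify that $\varphi$ has bounded length independent of $G$, cite Courcelle's Theorem \cite{Courcelle90,Arnborg91} for the $O(f(tw(G)) \cdot |V|)$ running time, and conclude that on any family of graphs of bounded treewidth the recognition problem is solvable in linear time.
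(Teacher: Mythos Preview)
Your proposal is correct and follows essentially the same approach as the paper: invoke Lemma~\ref{lem:equivalence} to reduce recognition to testing $ta(G)=2$, express this in a fixed-length MSOL sentence asking for a bipartition $(V_1,V_2)$ with each part inducing a connected acyclic subgraph, and apply Courcelle's Theorem. Your additional remarks on the tree-versus-forest distinction and the degenerate empty-part case are sound and only make the argument more careful than the paper's own exposition.
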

\begin{proof}
This is a consequence of the correctness of the MSOL formulation described in Appendix \ref{subsec:msolrecog} and the equivalence stated in Lemma \ref{lem:equivalence}.
\end{proof}

\section{Display graphs formed from trees and networks}
\label{sec:networks}

In this section we will consider the display graph formed by an unrooted binary phylogenetic network $N = (V,E)$ and an unrooted binary phylogenetic tree $T$ both on the same set of taxa $X$. We will show upper and lower bounds on the treewidth of $D(N,T)$ in terms of the treewidth $tw(N)$ of $N$ and the level $\ell(N)$ of $N$ (and thus also the reticulation number $r(N)$ of $N$). We will also show how these upper bounds can be leveraged algorithmically to give FPT results for \twu{ deciding} whether a given network $N$ displays a given tree $T$.

\subsection{\sfinal{Treewidth upper bounds}}

We first relate the treewidth of the display graph with the treewidth of the network $N$.

\begin{lemma}
\label{lem:twbound} Let $N = (V,E)$ be an unrooted binary phylogenetic network and $T$ an unrooted binary phylogenetic tree, both on $X$, where $|X| \geq 3$. If $N$ displays
$T$ then $tw( D(N,T)) \leq 2 tw(N) + 1$.
\end{lemma}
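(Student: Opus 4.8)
The plan is to build a tree decomposition of $D(N,T)$ directly out of an optimal tree decomposition of $N$, by ``thickening'' each bag with the appropriate vertices of $T$ while leaving the underlying tree of the decomposition unchanged. Let $(\mathcal{B},\mathbb{T})$ be a tree decomposition of $N$ of width $k=tw(N)$, and let $N'\subseteq N$ together with the surjection $f\colon V(N')\to V(T)$ be as guaranteed by Observation~\ref{obs:display}. For each bag $B_i\in\mathcal{B}$ set
\[
B_i' \;=\; B_i \;\cup\; \bigl\{\, v\in V(T) : f^{-1}(v)\cap B_i \neq \emptyset \,\bigr\},
\]
with the convention that for a taxon $v\in X$ the vertex ``$v\in V(T)$'' is literally the shared taxon $v$ of $D(N,T)$. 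The claim is that $(\{B_1',\dots,B_q'\},\mathbb{T})$ is a tree decomposition of $D(N,T)$ of width at most $2k+1$.

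First I would verify the width bound, which is where the factor $2$ comes from: the sets $f^{-1}(v)$ over $v\in V(T)$ are pairwise disjoint (property~(2) of Observation~\ref{obs:display}), so selecting one element of $f^{-1}(v)\cap B_i$ for each $v$ appearing in the second set above defines an injection of those vertices into $B_i$; hence $|B_i'|\le 2|B_i|\le 2(k+1)$. Property (tw1) is immediate from (tw1) for $(\mathcal{B},\mathbb{T})$ together with surjectivity of $f$. For (tw2), edges of $N$ are already covered since $B_i\subseteq B_i'$; for an edge $\{u,v\}\in E(T)$, property~(3) of Observation~\ref{obs:display} supplies an edge $\{\alpha,\beta\}\in E(N')\subseteq E(N)$ with $f(\alpha)=u$ and $f(\beta)=v$, so taking a bag $B_i$ of the original decomposition with $\{\alpha,\beta\}\subseteq B_i$ gives $u,v\in B_i'$ (witnessed by $\alpha$ and $\beta$ respectively); note that no case distinction on leaves versus internal vertices of $T$ is needed here.

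The delicate point, and what I expect to be the main obstacle, is the running intersection property (tw3) for the vertices of $X$, which play a double role as vertices of both $N$ and $T$. A vertex $w\in V(N)\setminus X$ lies in $B_i'$ exactly when $w\in B_i$, so its bags form a subtree of $\mathbb{T}$ by (tw3) for $(\mathcal{B},\mathbb{T})$. For an internal vertex $w$ of $T$, or for a taxon $w\in X$, the vertex $w$ lies in $B_i'$ exactly when $f^{-1}(w)\cap B_i\neq\emptyset$; here I would invoke the standard fact that, for any connected subgraph $C$ of a graph, the bags of a tree decomposition meeting $C$ induce a connected subtree, applied to the connected subgraph $f^{-1}(w)$ of $N$ (connected by property~(2)). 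For a taxon $w$ one must additionally check that this is consistent with $w$ already appearing in the bags $B_i$ with $w\in B_i$; this causes no problem because $f(w)=w$ implies $w\in f^{-1}(w)$, so the family of bags with $w\in B_i$ is contained in the (connected) family of bags meeting $f^{-1}(w)$. This yields a valid tree decomposition of $D(N,T)$ of width at most $2tw(N)+1$. As an alternative route, I would observe that $D(N,T)$ is a minor of the graph obtained by gluing two disjoint copies of $N$ along their common leaf set $X$ (in the second copy, delete everything outside $N'$ and contract each $f^{-1}(v)$ to a point, recovering $T$), and that this glued graph has treewidth at most $2tw(N)+1$ by exactly the bag-doubling argument above; the bound then follows from the minor-monotonicity of treewidth recalled earlier.
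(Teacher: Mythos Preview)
Your argument is correct and is essentially identical to the paper's proof: both take an optimal tree decomposition of $N$ and, using the surjection $f$ from Observation~\ref{obs:display}, add $f(u')$ to every bag containing $u'\in V(N')$, then verify (tw1)--(tw3) and the doubling of bag sizes in the same way. Your treatment is in fact slightly more explicit than the paper's on two points (the injection witnessing $|B_i'|\le 2|B_i|$, and the taxon case of (tw3)), and the closing minor-based remark is an additional alternative not present in the paper.
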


\begin{proof}Since $N$ displays $T$, \twu{we fix a subgraph $N'$ of $N$ that is a subdivision of $T$ and a surjection function $f$ from $N'$ to $T$ as defined in  Observation 1 (in section Preliminaries)}. Informally, $f$ maps taxa to taxa and  degree-3 vertices of $N'$ to the corresponding vertex of $T$. Each degree-2 vertex of $N'$ lies on a path corresponding to an edge $\{u,v\}$ of $T$; such vertices are mapped to $u$ or $v$, depending on how exactly the surjection was constructed.

Now, consider any tree decomposition $t$ of $N$. Let $k$ be the width of the tree decomposition, i.e., the largest bag in the tree decomposition has size $k+1$. We will
construct a new tree decomposition $t'$ for $D(N,T)$ as follows. For each vertex $u' \in V(N')$ we add $f(u')$ to every bag that contains $u'$.
To show that $t'$ is a valid tree decomposition for $D(N,T)$ we will show that it satisfies all the treewidth conditions. Condition (tw1) holds because $f$ is a surjection.

For property (tw2) we need to show that for every edge $e = \{ u,v \} \in E(T)$, there exists some bag $B \in V(t):
\{ u,v \} \subset B$. For this we use the third property of $f$ described in our observation: $\forall \{ u,v \} \in E(T), \exists_1 \{ \alpha, \beta \} \in E(N'):$ $f(\alpha)
= u$ and $f(\beta) = v$. For each $e = \{u,v\} \in E(T)$, let $\{ \alpha, \beta \} \in E(N')$ be
the edge which is mapped through $f$ to e. Since $\{ \alpha, \beta \} \in E(N)$, there must be a bag $B \in V(t)$ that contains both of $\alpha, \beta$. Since $f(\alpha) = u$ and $f(\beta) = v$,
both of $u,v$ will be added into $B$. For the last property (tw3) we need to show that the bags of $t$ where $u \in V(T)$ have been added form a connected component. For this,
we use property (2) of the function $f$: $\forall v \in V(T)$, the set $\{ u \in V(N'): f(u) = v \}$ forms a connected subtree in $N'$. Hence, the set of bags that contain at least one element from
$\{ u \in V(N'): f(u) = v \}$ form a connected subtree in the tree decomposition. These are the bags to which $v$ is
added, ensuring that (tw3) indeed holds for $v$.

We now calculate the width of $t'$: Observe that the size of each bag can at most double. This can happen when every vertex in the bag \sfinal{is in $V(N')$}
and $f(u') \neq f(v')$ for
every two vertices $u', v'$ in the bag. This causes the largest bag after this operation to have size  at most $2(k+1)$. That is, the width of the new decomposition is at most
$2k+1$.
\end{proof}

We move on and deliver a bound of the treewidth of the display graph $D(N,T)$ in terms of the level $\ell(N)$ of $N$. We remind the reader that a network $N$ is a level-$k$ network if the reticulation number of each biconnected component is at most $k$.

\begin{lemma}
\label{lem:lvlbound}
Let $N=(V,E)$ be an unrooted binary phylogenetic network and $T$ an unrooted binary phylogenetic tree, both on $X$, such that $|X|\geq 3$ and $N$ displays $T$. Then $tw(D(N,T)) \leq \ell(N) + 2$ where $\ell(N)$ is the level of $N$.
\end{lemma}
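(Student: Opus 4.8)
The plan is to reduce this to a statement about the treewidth of $N$ itself, and then apply Lemma~\ref{lem:twbound} in a refined form. The crucial observation is that a level-$k$ network has treewidth at most $k+1$: indeed, each biconnected component of $N$ has cyclomatic number at most $k$, so deleting at most $k$ edges from it yields a tree, and a graph obtained from a tree by adding $k$ edges has treewidth at most $k+1$ (place the $\le 2k$ endpoints of the added edges, or rather maintain them, in every bag of a width-$1$ decomposition of the spanning tree; more carefully, one builds a width-$(k+1)$ decomposition directly). Treewidth of a graph is the maximum treewidth over its biconnected components (since treewidth is unaffected by gluing along cut vertices), so $tw(N) \le \ell(N) + 1$. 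Plugging this into Lemma~\ref{lem:twbound} would give $tw(D(N,T)) \le 2(\ell(N)+1) + 1 = 2\ell(N) + 3$, which is \emph{not} good enough — we want $\ell(N) + 2$.

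So a direct application of the previous lemma is too lossy; instead I would re-run the argument of Lemma~\ref{lem:twbound} but starting from a tree decomposition of $N$ tailored to its level structure, and be careful about \emph{where} the doubling actually happens. The key point is that when we take a bag $B$ of the decomposition of $N$ and add $f(u')$ for each $u' \in B \cap V(N')$, the size only grows by the number of \emph{distinct} values $f(u')$ takes on $B$ that are not already present. In a good level-based decomposition, most vertices of a bag lie on ``tree-like'' parts of the network where $N'$ locally looks like $T$ (up to subdivision), so $f$ is essentially the identity there and contributes nothing; the genuine blow-up is confined to the few vertices witnessing the cyclomatic excess. I would make this precise by constructing, for each biconnected component $C$ of $N$ with $r(C) \le \ell(N)$, a tree decomposition in which one designated ``core'' bag contains a feedback edge set's endpoints and the rest of the decomposition is a subdivision-respecting decomposition of a spanning tree of $C$; then the added $f$-images only enlarge the single core bag, and that by at most $\ell(N)$, while the path-like bags stay small.

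Concretely, the steps are: (1) reduce to a single biconnected component, using that both treewidth and the display relation behave well under the block structure (a subdivision of $T$ inside $N$ restricts sensibly to blocks); (2) in a block $C$ with cyclomatic number $r \le \ell(N)$, pick a spanning tree $S$ of $C$ and let $F$ be the set of the (at most $r$) non-tree edges, with endpoint set $W$, $|W| \le 2r$ — but actually we can do better and bound what matters by $r$, not $2r$, by a more careful choice; (3) take the trivial path-decomposition of $S$ refined so each bag is an edge of $S$, then add all of $W$ to every bag, getting width at most $r + 1$; (4) now apply the $f$-augmentation of Lemma~\ref{lem:twbound}: a generic bag is $\{a, b\} \cup W$ where $\{a,b\}$ is an edge of $S$; if this edge lies on the image $N'$, then $f$ maps $a, b$ to the endpoints of a single edge of $T$, contributing at most two vertices, and one checks that combined with the contribution from $W$ the total stays at $\ell(N) + 3$ naively but drops to $\ell(N) + 2$ once one accounts for the overlap between $\{f(a), f(b)\}$ and $\{a,b\}$ on tree-edges and for vertices of $W$ not in $N'$. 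The main obstacle, and the part requiring the most care, is exactly this bookkeeping in step (4): showing the constant is $+2$ and not $+3$ forces one to argue that on the spanning-tree part the map $f$ never simultaneously increases the bag size \emph{and} leaves all of $W$'s images as fresh new vertices. I would handle this by choosing the spanning tree $S$ to contain a large portion of $N' \cap C$, so that the non-tree edges and their images are essentially the only source of growth, and then verifying bag sizes case by case (bag wholly inside $N'$, bag meeting $N' $ in one vertex, bag disjoint from $N'$).
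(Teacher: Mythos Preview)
Your proposal has a genuine gap: the step-(4) bookkeeping you flag as ``the main obstacle'' does not close, and the approach as written cannot reach $\ell(N)+2$. Even granting the vertex-cover refinement ($|W| \le r$ rather than $2r$) and choosing the spanning tree $S$ to contain the image $N'$ of $T$, a generic bag $\{a,b\} \cup W$ acquires $f$-images both from $\{a,b\}$ \emph{and} from $W \cap V(N')$; there is no mechanism forcing these to collapse, so after augmentation a bag can have roughly $2r+4$ vertices. The overlap/case analysis you sketch (``$f$ essentially the identity on tree-like parts'') does not apply to the vertices of $W$, which sit precisely on the cyclomatic part and may well lie in $N'$ with distinct images. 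In short, building a decomposition of $N$ first and then layering $f$-images on top is inherently a doubling move; no amount of local accounting undoes that.

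The paper's proof reverses the order of construction, and this is the missing idea. Extend the image of $T$ inside $N$ to a spanning tree $T'$ of $N$, and observe that $D(T',T)$ already has treewidth~$2$ (it is $D(T,T)$ up to suppressions and unlabeled pendants). Build an explicit width-$2$ tree decomposition $\mathcal{T}$ of $D(T',T)$ whose underlying tree is a subdivision of $T'$ --- so each vertex and edge of $T'$ corresponds to a specific bag or chain of bags. Now the $T$-side is already fully accounted for at width~$2$; all that remains is to cover, for each non-trivial biconnected component $C$ of $N$, the at most $\ell(N)$ edges of $C$ missing from $T'$. Take a minimal vertex cover $A(C)$ of those missing edges (so $|A(C)| \le \ell(N)$) and add all of $A(C)$ to every bag in the region of $\mathcal{T}$ corresponding to $C$. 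Because $N$ is binary its non-trivial biconnected components are vertex-disjoint, so no bag is hit by more than one $A(C)$, and every bag ends with at most $3 + \ell(N)$ vertices. The two points you were missing are: (i) start from a width-$2$ decomposition of $D(T',T)$ rather than from a decomposition of $N$ --- this absorbs the entire $T$-contribution up front, so no doubling occurs; and (ii) cover the missing edges with a vertex cover of size $\le \ell(N)$, not with all their endpoints.
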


\begin{proof}
Due to the fact that $N$ displays $T$, there is a subgraph $T'$ of $N$ that is a subdivision of $T$.  If $T'$ is a spanning tree of $N$, then keep $T'$ as is. Otherwise, construct a spanning tree $T'$ of $N$ by greedily adding edges to $T'$ until all vertices of $N$ are spanned. At this point, $T'$ contains exactly $|V|-1$ edges and consists of a subdivision of $T$ from which possibly some unlabelled pendant subtrees (i.e. pendant subtrees without taxa) are hanging.

We argue that $D(T',T)$ has treewidth 2, as follows.  First, note that $D(T,T)$ has treewidth 2, because $T$ is trivally compatible with $T$ (and $|X| \geq 3$) \cite{bryant2006compatibility}. Now, $D(T,T)$ can be obtained from $D(T', T)$ by repeatedly deleting unlabelled vertices of degree 1 and suppressing unlabelled degree 2 vertices. These operations cannot increase or decrease the treewidth \cite{DBLP:journals/tcs/KelkSW18}. Hence, $D(\sfinal{T'}, T)$ has treewidth 2.


For the purposes of the present proof we need a tree decomposition of $D(T,T')$ of width 2 with a very particular structure which we now construct explicitly. For each vertex $a' \in V(T')$ we create a singleton bag $\{a'\}$. For each edge $\{a', b'\} \in E(T')$ we insert the bag $\{a',b'\}$ between the two singleton bags $\{a'\}$ and $\{b'\}$. Now, recall that each vertex $a \in V(T)$ has a unique image $a' \in V(T')$. For each vertex $a \in V(T)$, add $a$ to the singleton bag $\{a'\}$. For each edge $\{a, b\} \in E(T)$, consider the vertices $a'$ and $b'$ in $T'$. We distinguish two cases:

\begin{description}
\item{Case 1.} If $\{a', b'\} \in E(T')$, remove the bag $\{a', b'\}$ that lies between bags $\{a, a'\}$ and $\{b, b'\}$ and replace it with the pair of bags $\{a, a', b\}, \{a', b', b\}$.

\item{Case 2.} If $\{a', b'\} \not \in E(T')$, then edge $\{a,b\} \in V(T)$ corresponds to a path $a', v_1, \ldots, v_t, b'$ in \markj{$T'$} where $t \geq 1$ and none of $v_1, \ldots, v_t$ are images of vertices from $T$. In the tree decomposition, this corresponds to the chain of bags $\{a, a'\}, \{a', v_1\}, \{v_1\}, \{v_1, v_2\}, \{v_2\}, \ldots, \{v_t, b'\}, \{b,b'\}$. In this case, we add $a$ to the bag $\{a', v_1\}$, add both $a$ and $b$ to bag $\{v_1\}$, and add just $b$ to all the remaining bags in the chain.
\end{description}

We denote the tree decomposition by $\mathcal{T}$. It is immediate to verify, by construction, that the above tree decomposition is indeed a valid tree decomposition, i.e., it satisfies all the three properties (tw1)-(tw3).

Crucially, the topology of $\mathcal{T}$ is a subdivision of $T'$: each vertex $a' \in V(T')$ corresponds to a unique bag of $\mathcal{T}$, and each edge in $E(T')$ corresponds to a unique chain of bags in $\mathcal{T}$. We leverage this property as follows.

Let $C$ be a non-trivial biconnected component of $N$. (By non-trivial we mean a biconnected component containing more than 2 vertices. We do this to exclude cut edges, which are formally also biconnected components).  \markj{Let $k = \ell(N)$. Then} we have that $|E(C)|-( |V(C)| - 1) \leq k$. Combined with the fact that $T'$ is a spanning tree of $N$, it follows that we can obtain $N$ from $T'$ by adding at most $k$ missing edges to $C$ (and repeating this for other non-trivial biconnected components). Let $M(C)$ be the at most $k$ edges missing from $C$ and let $A(C)$ be a (not necessarily minimum) minimal vertex cover of the edges in $M(C)$; clearly $|A(C)| \leq k$ since in the worst case we can select one distinct vertex per edge. Due to the topological structure of $\mathcal{T}$ the vertices and edges in $C$ map unambiguously into bags and chains of bags in $\mathcal{T}$. We add
all the vertices in $A(C)$ to all these bags. We repeat this for each non-trivial biconnected component of $N$. Due to the fact that $N$ has maximum degree 3, the non-trivial biconnected components of $N$ are vertex-disjoint, and hence the corresponding bags in $\mathcal{T}$ are all disjoint. This means that, after all the non-trivial biconnected components have been processed, each bag will contain at most $k+3$ vertices.

It remains to show that this is indeed a valid tree decomposition for $D(N,T)$. The vertex set of $D(N,T)$ is the same as that of $D(T, T')$ so (tw1) is clearly
satisfied. For each edge $\{x,y\} \in M(C)$, both $x$ and $y$ are inside $C$, so some bag (in the part of $\mathcal{T}$ corresponding to $C)$ contained $x$ and some bag contained $y$. Given that $A(C) \cap \{x,y\} \neq \emptyset$, adding all the vertices in $A(C)$ to all the bags (corresponding to $C$) ensures that some bag contains both $x$ and $y$. Hence, (tw2) is satisfied. Regarding (tw3), observe that each vertex $x \in A(C)$ lies inside $C$, so in $\mathcal{T}$ some bag (in the part of the decomposition corresponding to $C$) already contained $x$. Moreover, all the bags corresponding to $C$ induce a connected subtree of bags. Hence, adding $x$ to all
these bags cannot destroy the running intersection property for $x$. Hence, (tw3) holds.
\end{proof}

The following observation helps to contextualize Lemmas \ref{lem:twbound} and \ref{lem:lvlbound}.

\begin{observation}
\label{obs:chainbound}
Let $N$ be an unrooted binary phylogenetic network. Then $tw(N)-1 \leq \ell(N) \leq r(N)$.
\end{observation}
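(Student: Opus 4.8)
The plan is to prove the two inequalities separately, in both cases passing to the block (biconnected component) decomposition of $N$ and exploiting the fact that the cyclomatic number is additive over blocks.

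I would first dispatch $\ell(N)\le r(N)$. Let $C_1,\dots,C_b$ be the blocks of $N$. Their edge sets partition $E$, and a classical property of the block--cut tree gives $\sum_{i=1}^{b}\bigl(|V(C_i)|-1\bigr)=|V|-1$ \cite{diestel2010}. Hence
\[
\sum_{i=1}^{b} r(C_i)\;=\;\sum_{i=1}^{b}\Bigl(|E(C_i)|-|V(C_i)|+1\Bigr)\;=\;|E|-(|V|-1)\;=\;r(N).
\]
Since every $r(C_i)\ge 0$ and $\ell(N)=\max_i r(C_i)$, the bound $\ell(N)\le r(N)$ is immediate. (When $N$ is a tree every block is a single edge, so $\ell(N)=0=r(N)$, consistent with the statement.)

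For $tw(N)\le\ell(N)+1$ I would invoke the well-known fact that the treewidth of a graph with at least one edge equals the maximum of the treewidths of its blocks \cite{bodlaender1994tourist}, which reduces the claim to showing $tw(C)\le r(C)+1$ for an arbitrary block $C$ (using $r(C)\le\ell(N)$). For this, fix a spanning tree $S$ of $C$ and set $F=E(C)\setminus E(S)$, so $|F|=r(C)$. Pick a vertex cover $A$ of $F$ inside $C$ by choosing one endpoint of each edge of $F$, so $|A|\le r(C)$. Take the standard width-$1$ tree decomposition of $S$ (all bags of size at most $2$) and add every vertex of $A$ to every bag. Properties (tw1) and (tw3) survive this operation unchanged, and (tw2) now also holds for the edges of $F$, since each such edge has an endpoint in $A$ that lies in every bag. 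The largest bag has size at most $2+|A|\le r(C)+2$, so the width is at most $r(C)+1$; taking the maximum over all blocks yields $tw(N)\le\ell(N)+1$.

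The argument is routine; the only ``content'' is the appeal to two classical structural facts — additivity of the cyclomatic number over blocks and $tw(G)=\max_{C}tw(C)$ — so the main (mild) obstacle is merely stating these cleanly rather than any genuine difficulty. It is worth noting that both bounds can be tight: a single cycle has $tw=2=\ell+1$, and a network whose cyclic structure lives in a single biconnected component has $\ell=r$, which is why Observation~\ref{obs:chainbound} is phrased with inequalities rather than equalities.
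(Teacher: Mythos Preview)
Your proof is correct and follows essentially the same route as the paper: both arguments reduce to biconnected components via the fact $tw(G)=\max_C tw(C)$, and both bound $tw(C)$ by $r(C)+1$ using a spanning tree of $C$. The only cosmetic difference is that where the paper invokes the known lemma ``adding one edge raises treewidth by at most~1'' to go from the spanning tree to $C$, you instead build the tree decomposition explicitly by adding a vertex cover of the surplus edges to every bag---which is in fact one standard proof of that very lemma. Your treatment of $\ell(N)\le r(N)$ via additivity of the cyclomatic number over blocks is more detailed than the paper's one-line ``follows by definition,'' but amounts to the same observation that the maximum of nonnegative summands is bounded by their sum.
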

\begin{proof}
$\ell(N) \leq r(N)$ follows by definition. To see that $tw(N) - 1 \leq \ell(N)$, it is well-known that the treewidth of a graph is equal to the maximum treewidth ranging over all biconnected components in the graph \cite{bodlaender1998partial}. A spanning tree for each biconnected component can be obtained by deleting at most $\ell(N)$ edges, by definition. A tree has treewidth 1, and adding one edge to a graph can increase its treewidth by at most 1 \cite{bodlaender1998partial}. Hence, each biconnected component has treewidth at most 1+$\ell(N)$. (Alternatively, by observing that level-$k$ networks are almost $k$-trees, \cite[Theorem 74]{bodlaender1998partial} can be leveraged).
\end{proof}

The following corollary is therefore immediate.

\begin{corollary}
\label{cor:reticbound} Let $N = (V,E)$ be an unrooted binary phylogenetic network and $T$ an unrooted binary phylogenetic tree, both on $X$, where $|X| \geq 3$. If $N$ displays
$T$ then $tw( D(N,T)) \leq r(N)+2$. \label{lem:twboundRet}
\end{corollary}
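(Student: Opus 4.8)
The plan is to obtain Corollary~\ref{cor:reticbound} as a direct composition of the two facts already established in the excerpt, namely Lemma~\ref{lem:lvlbound} (which gives $tw(D(N,T)) \le \ell(N) + 2$ whenever $N$ displays $T$) and Observation~\ref{obs:chainbound} (which gives $\ell(N) \le r(N)$ unconditionally). Since the hypotheses of the corollary---$N$ an unrooted binary phylogenetic network, $T$ an unrooted binary phylogenetic tree, both on $X$ with $|X| \ge 3$, and $N$ displays $T$---are exactly the hypotheses of Lemma~\ref{lem:lvlbound}, there is nothing extra to set up.

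First I would invoke Lemma~\ref{lem:lvlbound} to conclude $tw(D(N,T)) \le \ell(N) + 2$. Then I would apply the inequality $\ell(N) \le r(N)$ from Observation~\ref{obs:chainbound}, which holds by definition of level (the level is the maximum reticulation number over biconnected components, and deleting edges to reach a spanning tree of the whole network requires deleting $r(N)$ edges in total, so no single biconnected component can need more than $r(N)$). Chaining the two bounds yields $tw(D(N,T)) \le \ell(N) + 2 \le r(N) + 2$, which is the claim.

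There is no genuine obstacle here: the corollary is stated precisely because it is an immediate consequence of the preceding two results, and the text even flags it as ``therefore immediate''. The only thing to be careful about is citing the correct hypotheses ($|X| \ge 3$ is needed for Lemma~\ref{lem:lvlbound}, since that lemma's proof uses the fact that $D(T,T)$ has treewidth $2$, which requires $|X| \ge 3$) and not accidentally claiming a bound in terms of $tw(N)$, which would instead route through Lemma~\ref{lem:twbound} and give the weaker-looking $2\,tw(N)+1$. So the proof is a one-line composition, and I would simply write: by Lemma~\ref{lem:lvlbound}, $tw(D(N,T)) \le \ell(N)+2$, and by Observation~\ref{obs:chainbound}, $\ell(N) \le r(N)$, hence $tw(D(N,T)) \le r(N)+2$. $\qed$
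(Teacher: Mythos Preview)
Your proposal is correct and matches the paper's approach exactly: the paper does not even spell out a proof, merely flagging the corollary as ``therefore immediate'' after Lemma~\ref{lem:lvlbound} and Observation~\ref{obs:chainbound}, and your one-line chaining $tw(D(N,T)) \le \ell(N)+2 \le r(N)+2$ is precisely the intended derivation.
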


Combining the above results yields the following:

\begin{theorem}
\label{thm:twbound}
Let $N$ be an unrooted binary phylogenetic network and $T$ be an unrooted binary phylogenetic tree, both on $X$. Then if $N$ displays $T$,
$$tw( D(N,T)) \leq \min \bigg\{ 2 tw(N) + 1, r(N) + 2, \ell(N)+2 \bigg\}.$$
\end{theorem}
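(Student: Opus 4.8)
The plan is to observe that Theorem~\ref{thm:twbound} is a purely mechanical combination of results already proved. The three quantities $2tw(N)+1$, $r(N)+2$ and $\ell(N)+2$ are each, individually, an upper bound on $tw(D(N,T))$ under the hypothesis that $N$ displays $T$: the first is exactly Lemma~\ref{lem:twbound}, the third is exactly Lemma~\ref{lem:lvlbound}, and the second is Corollary~\ref{cor:reticbound}. Since all three inequalities hold simultaneously for the same graph $D(N,T)$ whenever $|X| \geq 3$ and $N$ displays $T$, the treewidth is bounded by the minimum of the three, which is precisely the claimed statement.

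Concretely, first I would invoke Lemma~\ref{lem:twbound} to get $tw(D(N,T)) \leq 2tw(N)+1$. Then I would invoke Lemma~\ref{lem:lvlbound} to get $tw(D(N,T)) \leq \ell(N)+2$. For the third term I would recall that Corollary~\ref{cor:reticbound} (which itself follows from Lemma~\ref{lem:lvlbound} together with the inequality $\ell(N) \leq r(N)$ of Observation~\ref{obs:chainbound}) gives $tw(D(N,T)) \leq r(N)+2$. A real number bounded above by each of three values is bounded above by their minimum, so
$$tw(D(N,T)) \leq \min\bigl\{\, 2tw(N)+1,\ r(N)+2,\ \ell(N)+2 \,\bigr\}.$$

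The only mild subtlety worth flagging is the hypothesis $|X| \geq 3$: Lemmas~\ref{lem:twbound} and~\ref{lem:lvlbound} (and hence Corollary~\ref{cor:reticbound}) all require it, so the combined statement inherits that requirement, and in the theorem as stated one should either assume $|X| \geq 3$ or note that the small cases $|X| \in \{1,2\}$ are degenerate. There is no genuine obstacle here — this is a corollary-style assembly step whose purpose is to collect the earlier bounds into a single convenient inequality for use in the subsequent FPT argument. The ``hard part,'' such as it is, is merely making sure the three cited results are all applicable to the same object under the same hypotheses, which they manifestly are.
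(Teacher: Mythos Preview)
Your proposal is correct and matches the paper's approach exactly: the paper introduces Theorem~\ref{thm:twbound} with the phrase ``Combining the above results yields the following'' and gives no further proof, so it is indeed just the assembly of Lemma~\ref{lem:twbound}, Lemma~\ref{lem:lvlbound}, and Corollary~\ref{cor:reticbound} that you describe. Your remark about the implicit $|X|\geq 3$ hypothesis is also apt, since the paper's theorem statement omits it while the cited lemmas all require it.
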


Note that, from the perspective of $r(N)$ and $\ell(N)$ the bounds $\ell(N)+2$ and $r(N)+2$ are sharp, since if $N=T$ then $r(N)=\ell(N)=0$ and $D(N,T)$ has treewidth 2 \cite{bryant2006compatibility}. Curiously, the treewidth bound gives 3 for this same instance: an additive error of 1. In Section \ref{sec:lb} we will further analyse the sharpness of this bound.

We remark that $tw(N)$ can be arbitrarily small compared to $\ell(N)$ (and $r(N)$). For example, the display graph of two copies of the same tree $T$ on $n$ taxa has treewidth 2. Re-introducing taxa to turn the degree-2 vertices into degree-3 vertices, we obtain a biconnected treewidth 2 phylogenetic network $N=(V,E)$ with $3n-4$ vertices
and $5n-6$ edges, so $\ell(N)=r(N)=|E|-(|V|-1) \rightarrow \infty$ as $n \rightarrow \infty$. However, for $N$ with low $\ell(N)$ the bound $\ell(N)+2$ will \sfinal{potentially}
be stronger than $2tw(N)+1$.

The above bounds raise a number interesting points about the phylogenetic interpretation of treewidth. First, consider the case where a binary network $N$ \emph{does not} display a given binary phylogenetic network $T$. As we can see in Figure \ref{fig:net}, there
is a network $N$ and a tree $T$ such that $N$ does not display $T$ and yet the treewidth of their display graph is equal to the treewidth of $N$ which (as can be easily verified) is equal to three. Hence ``does not display'' does not necessarily cause an increase in the treewidth. On the other hand, our results from \cite{DBLP:journals/tcs/KelkSW18} show that for two incompatible unrooted binary phylogenetic trees (vacuously: neither of which displays the other, and both of which have treewidth 1) the treewidth of the display graph can be as large as linear in the size of the trees. The increase in treewidth in this situation is asymptotically maximal. So the relationship between ``does not display'' and treewidth is rather complex. Contrast this with the bounded growth in treewidth articulated in \sfinal{Theorem \ref{thm:twbound}}. Such bounded growth opens the door to algorithmic applications.


\begin{figure}[ht]
\centering
\includegraphics[scale=0.8]{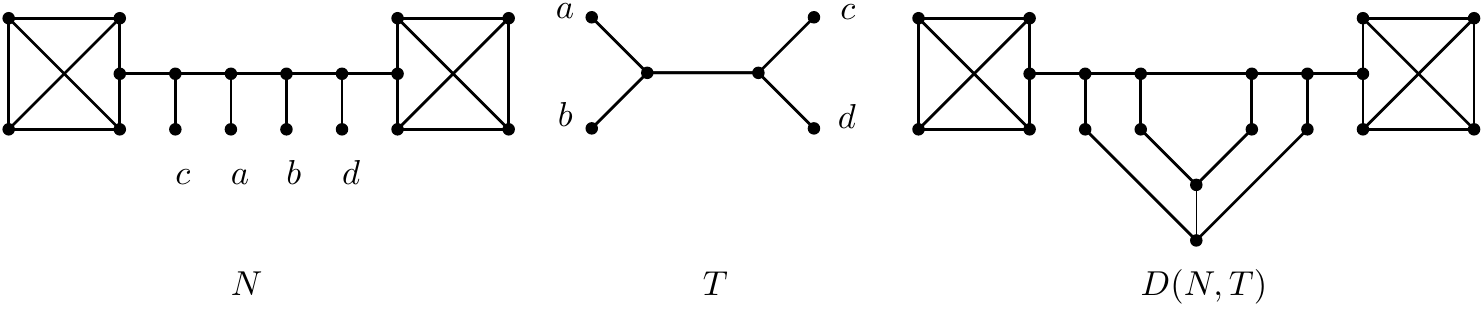}
\caption{The network $N$ does not display the tree $T$ but the treewidth of their display graph is equal to the treewidth of $N$, which is equal to 3. \sfinal{(Note also that, if in $T$ the positions of $b$ and $c$ are swapped, then $N$ does display $T$ but both the network and
the new display graph will still have treewidth 3).}}
\label{fig:net}
\end{figure}

\subsection{An algorithmic application}
We give an example of how the upper bounds from the previous section can be leveraged algorithmically.
The Unrooted Tree Compatibility problem (UTC) is simply the \textbf{NP}-hard problem of determining
whether an unrooted binary phylogenetic network $N = (V,E)$ on $X$ displays an unrooted binary
phylogenetic tree $T$, also on $X$. In \cite{van2017unrooted} a linear kernel is described for the UTC problem and, separately, a bounded-search branching algorithm. Summarizing, these yield FPT algorithms parameterized by $r(N) = |E|-(|V|-1)$
i.e. algorithms that can solve UTC in time at most $f( r(N) ) \cdot \text{poly}( |N| + |T| )$ for some function $f$ that depends only on $r(N)$. \sfinal{We emphasize that these results are more involved than the trivial $2^{r(N)} \cdot \text{poly}( |N| + |T| )$ FPT algorithm for the \emph{rooted} version of the problem.}

Here we give an FPT proof using Courcelle's Theorem. We prove that he problem is FPT when parameterized by $tw(N$). This result has not appeared in the literature before and is potentially interesting given that $tw(N)$ can be much smaller than $\ell(N)$. FPT in terms of $r(N)$ and $\ell(N)$ follow as a corollary of this, due to Observation \ref{obs:chainbound}.

\begin{theorem}
\label{th:twUTC}
Given an unrooted binary phylogenetic network $N = (V,E)$ and an unrooted binary phylogenetic tree both on $X$, we can determine in time $O( f(t) \cdot n )$ whether $N$ displays $T$, where $t$ is $tw(N)$ and $n=|V|$.
\end{theorem}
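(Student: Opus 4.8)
The plan is to apply Courcelle's Theorem: express ``$N$ displays $T$'' as a fixed-length MSOL formula evaluated on a suitable graph of bounded treewidth, namely the display graph $D(N,T)$, and then invoke Lemma~\ref{lem:twbound} to bound its treewidth by $2tw(N)+1 = 2t+1$. Since the display graph has at most $O(n)$ vertices (as $|X| \le n$ and $T$ is binary with $O(|X|)$ vertices), Courcelle's Theorem then yields a running time $O(f(t)\cdot n)$ for some function $f$ depending only on $t$. So the real content is (i) choosing the right structure on which to write the formula, and (ii) writing an MSOL sentence of bounded length that is satisfied precisely when $N$ displays $T$.

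The first step is to fix the combined structure. I would take the vertex set $V(N) \cup V(T)$ with the edge relations of $N$ and of $T$ kept as two separate binary predicates, together with a unary predicate marking the leaves and a binary predicate linking a leaf of $N$ to the leaf of $T$ carrying the same taxon (equivalently, just work with $D(N,T)$ with the taxa-vertices retained and coloured, so that from a taxon vertex one can reach its two preimages). This is essentially the suppressed display graph augmented with colours; it is a minor-modification of $D(N,T)$ and thus still has treewidth $O(t)$, and the colours add only a constant amount of MSOL machinery. The second step is the MSOL sentence itself. Following the formulation of Observation~\ref{obs:display}, I would guess the existence of an edge set $F \subseteq E(N)$ (second-order quantifier over edges, or equivalently over vertex pairs) and assert: (a) the subgraph $N' = (V(F), F)$ spanned by $F$ is connected and acyclic (``is a subtree''), expressible as: every nonempty vertex subset touching $F$ has an edge of $F$ leaving it, plus a standard acyclicity formula; (b) $N'$ contains all taxon vertices; (c) $N'$, after suppressing degree-2 vertices, is isomorphic to $T$. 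Point (c) is the delicate one and I would phrase it directly on $N'$ without literally suppressing: for every edge $\{a,b\}$ of $T$ there is a path in $N'$ between (the $N$-copies of) the endpoints' ``representatives'' that is internally vertex-disjoint from the representatives of all other $T$-vertices, and conversely the representatives partition $V(N')$ into connected classes, one per vertex of $T$, respecting adjacency. To keep this finite, I would existentially guess, for each vertex $v$ of $T$, a single representative vertex $g(v)$ of $N'$ — but $T$ is part of the input, not of bounded size, so I cannot quantify per-vertex. Instead I guess one vertex subset $R \subseteq V(N)$ (the set of all representatives of degree-$\ge 3$ vertices of $T$) and use the existing adjacency of $T$: require a bijection between $R$-induced connected components of $N'$ and vertices of $T$, which I encode by asking that the ``quotient'' of $N'$ by the partition into maximal connected pieces separated by... — more cleanly, guess a set $S$ of ``branch paths'' as an edge subset and argue combinatorially. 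The cleanest route, and the one I would actually write, mirrors the appendix formulation the paper already uses for \textsc{TreeArboricity}: guess the image $N'$ as an edge set $F$, guess a partition of $V(N')$ into parts via a finite colouring (but since $T$ is unbounded we cannot colour per $T$-vertex) — hence I instead assert the purely structural fact that $N'$ is a subdivision of a tree whose suppressed form has the same ``shape'' as $T$, which I can check by a degree-preserving correspondence: there is an edge set $F$ and an injection of $V(T)$ into $V(N')$ ... again unbounded.

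Because of that obstacle, the honest plan is: guess $F \subseteq E(N)$ with $(V(F),F)$ a subtree spanning all taxa; then guess a second edge set $P \subseteq F$ that is the union of ``long edges'' together with a set $Q$ of degree-$\ge 3$ vertices of $(V(F),F)$; assert that contracting every degree-2 vertex of $(V(F),F)$ yields a graph whose vertex set is in bijection with $V(T)$ and whose edge set is in bijection with $E(T)$ in an adjacency-preserving way — and crucially, express this last isomorphism by asking for a colouring $c$ of $V(F) \cup V(T)$ with a \emph{bounded} palette such that $c$ is constant on each contracted class, distinct on adjacent classes, and matches a proper-colouring-style certificate on $T$; since tree isomorphism is not MSOL-expressible in general this does not literally work, so the final resolution — and I expect this to be the main obstacle — is to carry $T$ inside the structure (which we do) and write the isomorphism as: a binary relation $h \subseteq V(N') \times V(T)$ that is a function, surjective, sends taxa to taxa identically (using the taxon-linking predicate), is constant on each maximal $T$-image class, and such that $\{a,b\} \in E(T)$ iff there is exactly one edge of $N'$ between $h^{-1}(a)$ and $h^{-1}(b)$ — all of which \emph{is} MSOL-expressible because $h$ is a \emph{single} binary relation (a set of pairs, hence a legal monadic second-order object over the structure with universe $V(N) \cup V(T)$), requiring no per-vertex quantification. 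Once this sentence $\varphi$ is written and its length seen to be an absolute constant, correctness follows from Observation~\ref{obs:display}, bounded treewidth of the augmented $D(N,T)$ follows from Lemma~\ref{lem:twbound} and the minor-monotonicity of treewidth, and Courcelle's Theorem delivers the claimed $O(f(t)\cdot n)$ bound. The main work, then, is verifying that the relation-$h$ encoding of ``$N'$ is a subdivision of $T$'' is both correct and short; everything downstream is an invocation of machinery already assembled in the paper.
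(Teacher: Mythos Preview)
Your high-level plan is right: build $D(N,T)$, bound its treewidth in terms of $t$, and apply Courcelle's Theorem to a fixed-length MSOL sentence.  But there are two genuine gaps.

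\textbf{The decisive one is the MSOL step.}  You propose to existentially quantify a binary relation $h \subseteq V(N') \times V(T)$ and call it ``a set of pairs, hence a legal monadic second-order object''.  It is not.  \emph{Monadic} second-order logic permits quantification only over \emph{unary} relations --- subsets of the universe (vertices, and in $\mathrm{MSO}_2$ also edges, because edges are modelled as first-class elements).  An arbitrary set of pairs of vertices is a dyadic relation and is precisely what MSOL forbids you to quantify over.  You cannot recover $h$ as a set of edges either, because the pairs $(u,v)$ with $u\in V(N)$, $v\in V(T)$ are not edges of $D(N,T)$ (and adding all $|V(N)|\cdot|V(T)|$ such edges to the structure would destroy the treewidth bound).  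So the ``relation-$h$ encoding'' does not work, and the earlier parts of your proposal already correctly diagnose why per-vertex quantification and finite colourings fail for the same reason.  The paper sidesteps this entirely: instead of expressing ``$N'$ is a subdivision of $T$'' via an explicit isomorphism, it uses the phylogenetic fact that two unrooted binary trees on the same taxon set are identical iff they display the same set of quartets.  The MSOL sentence guesses an edge set $E'\subseteq E_N$ whose removal leaves a spanning tree of $N$, and then asserts, with a single block of four first-order taxon variables, that every quartet topology is realised in $T$ iff it is realised in $N\setminus E'$.  This is constant length and uses only set-of-vertices and set-of-edges quantification.

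\textbf{A second, smaller gap:} Lemma~\ref{lem:twbound} bounds $tw(D(N,T))$ by $2t+1$ only \emph{when $N$ displays $T$}; when it does not, $tw(D(N,T))$ can be arbitrarily large, so you cannot simply ``invoke Lemma~\ref{lem:twbound}'' and hand $D(N,T)$ to Courcelle.  The paper first runs Bodlaender's linear-time algorithm on $D(N,T)$ with target width $2t+1$; if the width exceeds this, it correctly outputs \textsc{No} by Lemma~\ref{lem:twbound}, and otherwise it has the bounded-width decomposition needed for Courcelle.  You should include this guard explicitly.
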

\begin{proof}
We run Bodlaender's linear-time FPT algorithm \cite{Bodlaender96} to compute a tree decomposition of $D(N,T)$ and return NO if the treewidth is larger than $2t+1$\footnote{The same algorithm can be used to first compute $t$, if it is not known.}. This is
correct by Lemma \ref{lem:lvlbound}. Otherwise, we have a bound on the treewidth of $D(N,T)$ in terms of $t$. Subsequently, we construct the constant-length MSOL sentence described in  Appendix \ref{subsec:whatever} and apply the Arnborg et al. \cite{Arnborg91} variant of
Courcelle's Theorem \cite{Courcelle90}, from which the result follows. (Note that $D(N,T)$ has $O(n)$ vertices and $O(n)$ edges). The result can be made constructive if desired i.e. in the event
of a YES answer the actual set of edge cuts in $N$ (to obtain an image of $T$) can be obtained.
\end{proof}

\begin{corollary}
Given an unrooted binary network $N = (V,E)$ and an unrooted binary tree both on $X$, we can determine in time $O( f(k) \cdot n )$ whether $N$ displays $T$, where $k =
\ell(N)$ and $n=|V|$.
\end{corollary}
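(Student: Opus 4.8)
The plan is to obtain this as an immediate consequence of Theorem \ref{th:twUTC} together with Observation \ref{obs:chainbound}. First I would recall that Observation \ref{obs:chainbound} gives $tw(N) \leq \ell(N) + 1 = k+1$. Hence the algorithm of Theorem \ref{th:twUTC}, whose running time is $O(f(tw(N)) \cdot n)$, already runs in time $O(f(k+1) \cdot n)$; setting $g(k) := f(k+1)$ yields the claimed bound $O(g(k)\cdot n)$. Importantly, we do not even need to compute $tw(N)$ exactly: as in the proof of Theorem \ref{th:twUTC} we run Bodlaender's linear-time algorithm on $D(N,T)$ with the threshold set to $2(k+1)+1 = 2k+3$, and if the treewidth is reported to exceed this threshold then, by Lemma \ref{lem:twbound}, $N$ does not display $T$ and we answer NO; otherwise we have a tree decomposition of width $O(k)$ and can apply the Arnborg et al. variant of Courcelle's Theorem to the constant-length MSOL sentence of Appendix \ref{subsec:whatever} expressing that $N$ displays $T$.

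A slightly more direct route, which avoids the factor-two blow-up, is to use Lemma \ref{lem:lvlbound} in place of Lemma \ref{lem:twbound}: since $N$ displays $T$ only if $tw(D(N,T)) \leq \ell(N) + 2 = k+2$, one first computes $\ell(N)$ in linear time (locate the biconnected components of $N$, and for each component $C$ evaluate $|E(C)| - |V(C)| + 1$), then runs Bodlaender's algorithm on $D(N,T)$ with threshold $k+2$, answering NO if this threshold is exceeded, and applying Courcelle's Theorem otherwise. Correctness follows from Lemma \ref{lem:lvlbound}, and the running time is $O(f'(k)\cdot n)$ because $D(N,T)$ has $O(n)$ vertices and $O(n)$ edges.

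There is essentially no hard step here: the substance of the corollary is already contained in Lemmas \ref{lem:twbound} and \ref{lem:lvlbound}, Observation \ref{obs:chainbound}, and Theorem \ref{th:twUTC}. The only point requiring a little care is to verify that the running-time guarantee is genuinely a function of $k$ alone; this is ensured precisely by the fact that the relevant treewidth bound (either $tw(N) \leq k+1$ or $tw(D(N,T)) \leq k+2$) is itself a function of $k$, and that $\ell(N)$ can be computed within the allotted linear time. I would therefore expect the write-up to be a short paragraph invoking these two ingredients.
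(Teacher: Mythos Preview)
Your proposal is correct and matches the paper's approach exactly: the paper's proof is the single line ``Immediate from Theorem \ref{th:twUTC} and Observation \ref{obs:chainbound}.'' Your first paragraph spells out precisely this deduction (with the extra care of renaming $f(k+1)$ to $g(k)$), and the alternative route via Lemma \ref{lem:lvlbound} you sketch is a valid variant but not needed.
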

\begin{proof}
Immediate from Theorem \ref{th:twUTC} and \twu{Observation}
\ref{obs:chainbound}.
\end{proof}

\subsection{\sfinal{Treewidth lower bounds}}

\label{sec:lb}

In this subsection, we show that the upper bound $tw(D(N,T)) \leq 2tw(N) + 1$ is almost optimal, in the sense that there exist a family of display graphs $D(N,T)$ such that $N$ displays $T$ and $tw(D(N,T)) \geq 2tw(N)$. \sfinal{(Note that, irrespective of whether $N$ displays
$T$, $tw(D(N,T)) \geq tw(N)$ always holds because $N$ is a minor of $D(N,T)$; see Figure \ref{fig:net} for examples when $tw(D(N,T)) = tw(N)$.)}

Fix some integer $r$ and an integer $n$ such that $n > 2r + 2$. We will give a construction for a network $N$ and tree $T$ on a set of $rn$ leaves, such that $tw(N) = r$, $tw(N,T) \geq 2r$, and $N$ displays $T$. For the sake of convenience we will assume that $r$ is even, though the construction can easily be modified to handle cases where $r$ is odd.

The intuition behind the construction is as follows. The network $N$ will have roughly the same structure as an $r \times (n+1)$ grid (with $r$ rows and $n+1$ columns), with leaves attached to the horizontal edges. An $r \times (n+1)$ grid has treewidth $\min(r,n+1) = r$, and so $N$ also has treewidth $r$. The tree $T$ is a long caterpillar that weaves back and forth across the rows of the grid (see Figure~\ref{fig:gridExampleTreeEmbedded}). Thus $T$ is displayed by $N$. However, the display graph $D(N,T)$ has (very roughly) the structure of a $2r \times (n+1)$ grid, and as such can be shown to have treewidth at least $2r$. We remind that a caterpillar graph is basically a tree where all degree-1 vertices are on distance 1 from a  central path.

We now proceed with the formal construction.


\begin{description}
\item[Vertices of $N$ and taxa:] Let the taxon set $X = \{x_{i,j}: i \in [r], j \in [n]\}$. For each $i \in [r], j \in [n]$, $N$ will contain a leaf labelled with $x_{i,j}$. The internal vertices of $N$ are $y_{i,j}$ for each $i \in [r], j\in [n]$, and $u_{i,j}, v_{i,j}$ for each $i \in [r], j\in[n]\cup \{0\}$. (Note that some of these vertices will be deleted or suppressed at the end of the construction, in order to turn $N$ into a phylogenetic network with no unlabelled leaves.)

\item[Edges:] The edges of $N$ are as follows.
For each $i \in [r], j \in [n]$, let $\{y_{i,j}, x_{i,j}\}$ be an edge in  $N$.
In addition let $\{u_{i,j-1}, v_{i,j-1}\}$, $\{v_{i,j-1}, y_{i,j}\}$, $\{y_{i,j}, u_{i,j}\}$, $\{u_{i,j}, v_{i,j}\}$ be ``horizontal" edges in $N$.
For each  $i \in [r-1], j \in [n]\cup \{0\}$, let $\{v_{i,j}, u_{i+1,j}\}$ be a ``vertical" edge in $N$.

Finally, we delete all unlabeled degree-$1$ vertices (namely $u_{1,0}$ and $v_{r,n}$), and then suppress all degree-$2$ vertices
(namely $u_{i,0}$ and $v_{i,n}$ for all $i \in [r]$, as well as $u_{1,j}$ and $v_{r,j}$ for all $j \in [n]\cup \{0\}$\markj{, and the vertices $v_{1,0}$ and $u_{r,n}$}).
\textcolor{black}{Note that this
causes $v_{i,0}$ to be adjacent to $v_{i+1,0}$ for $2 \leq i \leq r-2$, and also $u_{i, n}$ to be adjacent to $u_{i+1,n}$
for $2 \leq i \leq r-2$.} See Figure~\ref{fig:gridExample} for an example when $r = 4, n = 11$.


\begin{figure}
\centering
 \includegraphics{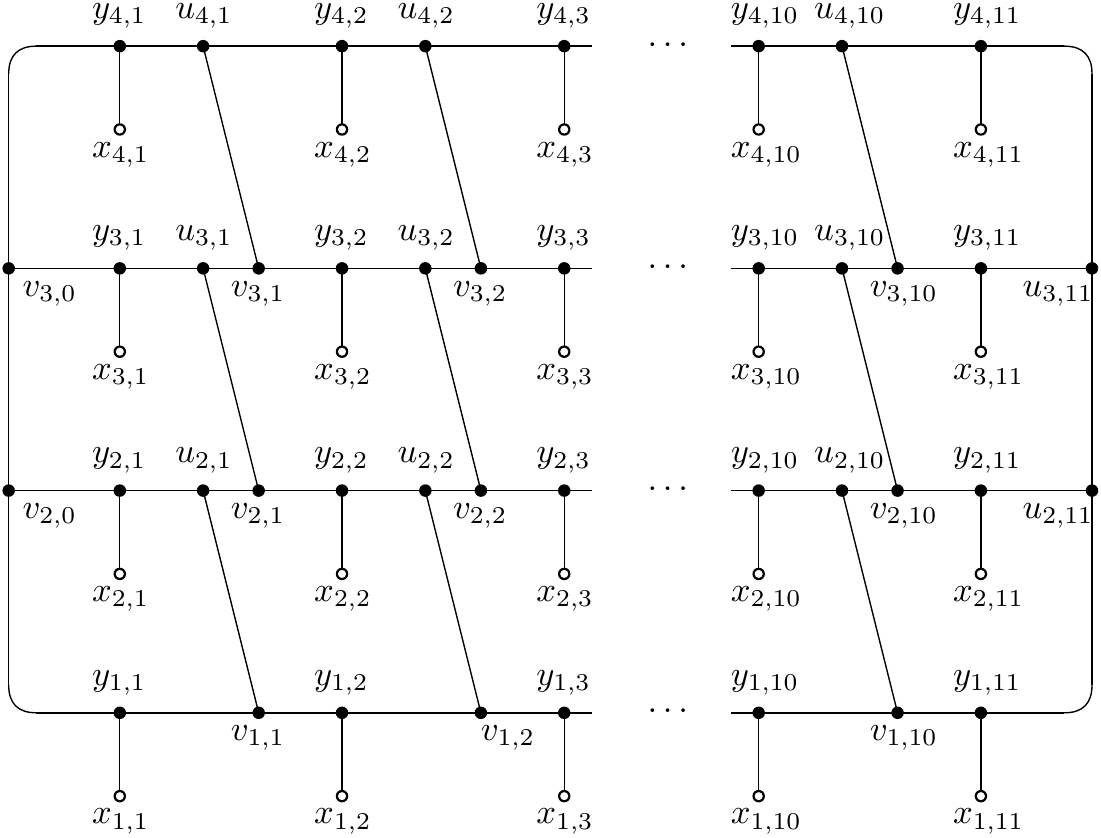}
 \caption{The network $N$ when $r=4$ and $n = 11$.}
 \label{fig:gridExample}
\end{figure}

\item[The tree $T$:] We next construct the tree $T$ as follows. For each $i \in [r], j \in [n]$, $T$ will contain a leaf labelled with $x_{i,j}$.
The internal vertices of $T$ are $z_{i,j}$ for each $i \in [r], j\in [n]$.
For each  $i \in [r], j\in [n]$, there is an edge $\{z_{i,j}, x_{i,j}\}$.
For each $i \in [r]$ and $j\in [n-1]$ there is an edge $\{z_{i,j}, z_{i,j+1}\}$.
Furthermore, for odd $i\in [r-1]$ there is an edge $\{z_{i,n}, z_{i+1,n}\}$,
and for even $r \in [r-1]$ there is an edge $\{z_{i,1}, z_{i+1,1}\}$.
Finally, suppress the degree-$2$ vertices $z_{1,1}$ and $z_{r,1}$ (or $z_{1,1}$ and $z_{r,n}$ when $r$ is odd).
See Figure~\ref{fig:treeExample} for an example when $r = 4, n = 11$.
\end{description}

\begin{figure}
\centering
 \includegraphics{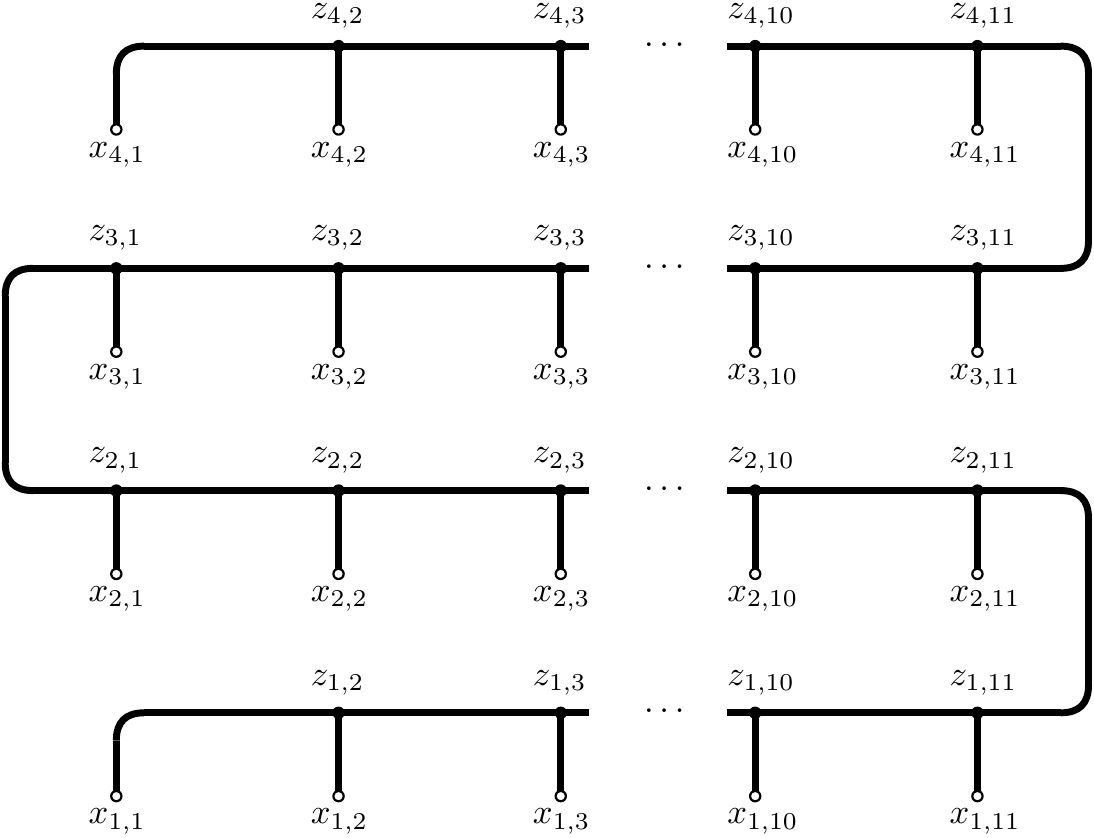}
 \caption{The tree $T$ when $r=4$ and $n = 11$.}
 \label{fig:treeExample}
\end{figure}

\begin{figure}
\centering
 \includegraphics{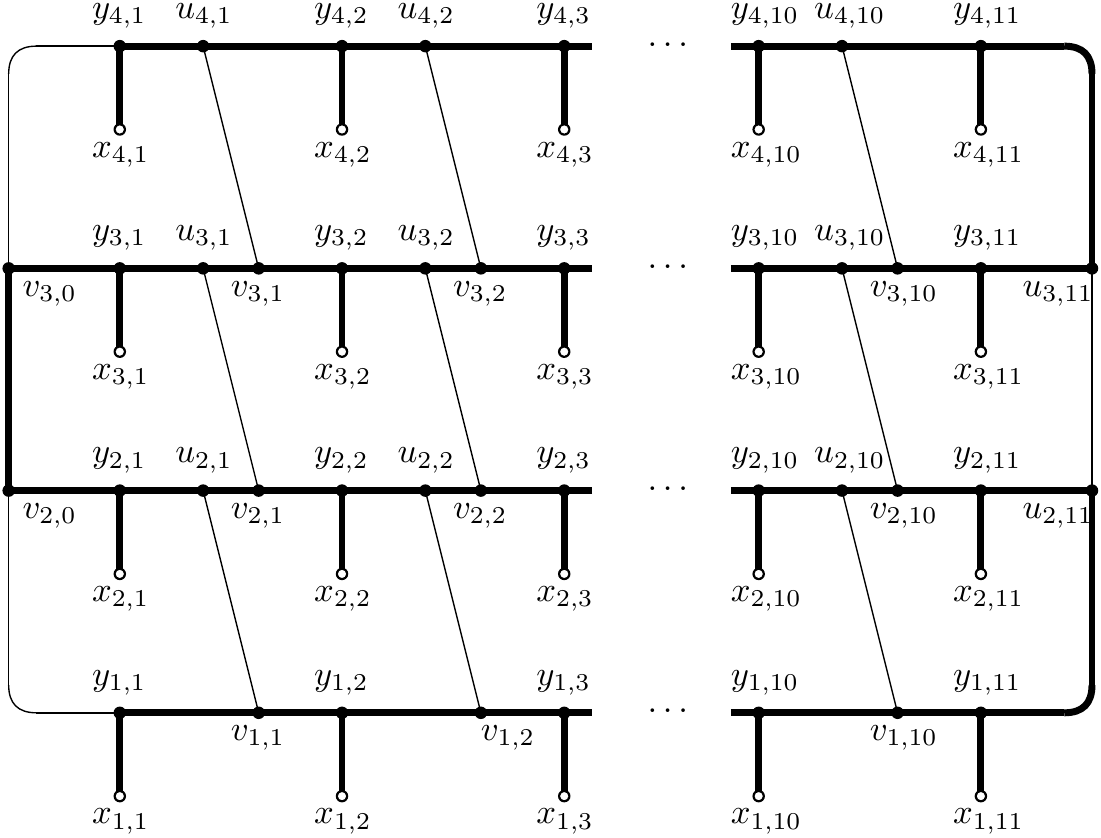}
 \caption{The network $N$ for $r=4, n = 11$, with the tree $T$ drawn in bold.}
 \label{fig:gridExampleTreeEmbedded}
\end{figure}

\begin{lemma}
$T$ is displayed by $N$.
\end{lemma}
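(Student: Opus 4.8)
The plan is to exhibit an explicit subgraph $N'$ of $N$ together with a bijection-on-internal-vertices that witnesses the fact that $N'$ is a subdivision of $T$; by the definition of ``displays'' (and Observation~\ref{obs:display}) this suffices. Concretely, I would keep in $N'$ exactly the following edges of $N$: all pendant edges $\{y_{i,j},x_{i,j}\}$ (these must be present since every taxon of $T$ is a leaf of $T$), all horizontal edges of each row $i$, and for each row-switch of the caterpillar $T$ exactly one vertical edge connecting the end of row $i$ to the start of row $i+1$ --- namely $\{v_{i,n},u_{i+1,n}\}$ (the right end) for odd $i$, and $\{v_{i,0},u_{i+1,0}\}$ (the left end) for even $i$, matching the definition of $T$ where the odd rows are joined at column $n$ and the even rows at column $1$. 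All other vertical edges of $N$ are discarded. I would then check that the resulting $N'$ is connected, acyclic (it is a ``snake'' through the grid), spans all the leaves, and contains no unlabelled leaves, so it is a genuine subtree of $N$.

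Next I would define the map $f : V(N') \to V(T)$ and verify it realises $N'$ as a subdivision of $T$. The natural choice is $f(x_{i,j}) = x_{i,j}$, $f(y_{i,j}) = z_{i,j}$, and each degree-$2$ vertex $u_{i,j}$ or $v_{i,j}$ of $N'$ lies on the horizontal path between two consecutive $y$-vertices (or between a $y$-vertex and a row-switch), so it gets mapped to one of the two $z$-vertices bounding the corresponding edge $\{z_{i,j},z_{i,j+1}\}$ (or $\{z_{i,n},z_{i+1,n}\}$, resp. $\{z_{i,1},z_{i+1,1}\}$) of $T$; I would just fix a consistent convention, e.g. map $u_{i,j}$ and $v_{i,j}$ both to $z_{i,j}$. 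One then checks the three conditions of Observation~\ref{obs:display}: condition (1) is immediate; for condition (2) the preimage of $z_{i,j}$ is $\{y_{i,j},u_{i,j},v_{i,j}\}$ (intersected with $V(N')$), which induces a connected path in $N'$; for condition (3), each edge $\{z_{i,j},z_{i,j+1}\}$ of $T$ is covered by the unique $N'$-edge $\{u_{i,j},v_{i,j}\}$ (with the endpoints mapping to $z_{i,j}$ and $z_{i,j+1}$ respectively after a careful choice of convention), and the row-switch edges of $T$ are covered by the single retained vertical edge. Equivalently --- and this is the cleaner way to phrase it --- one contracts in $N'$ every edge internal to a preimage set $f^{-1}(z_{i,j})$ and observes that what remains is exactly $T$.

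Finally I would account for the suppressions and deletions performed at the end of the constructions of $N$ and $T$: the degree-$1$ and degree-$2$ clean-up on the $N$ side (deleting $u_{1,0},v_{r,n}$ and suppressing the border $u$'s and $v$'s) and on the $T$ side (suppressing $z_{1,1},z_{r,1}$). These operations are compatible --- suppressing a degree-$2$ vertex $z_{i,j}$ of $T$ corresponds to suppressing the whole path $f^{-1}(z_{i,j})$ in $N'$, and deleting an unlabelled pendant of $N$ that is not in $N'$ does not affect $N'$ --- so the subdivision relationship is preserved after clean-up. The main obstacle, and the only place real care is needed, is bookkeeping: getting the orientation conventions for $f$ on the degree-$2$ vertices consistent so that condition (3) of Observation~\ref{obs:display} holds with the correct endpoints, and making sure the retained vertical edges are attached at the correct column ($n$ for odd row-switches, $1$ for even row-switches) to match the zig-zag pattern of the caterpillar $T$, especially at the boundary rows $i=1$ and $i=r$ where vertices have been suppressed. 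Once the convention is pinned down, everything else is a routine verification best done by appeal to Figure~\ref{fig:gridExampleTreeEmbedded}, where the embedded copy of $T$ is drawn in bold.
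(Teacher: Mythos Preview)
Your proposal is correct and follows essentially the same approach as the paper: both exhibit the ``snake'' embedding of the caterpillar $T$ into the grid-like $N$ by keeping the horizontal rows and exactly one vertical edge at each row-switch (right end for odd $i$, left end for even $i$), with $z_{i,j}\mapsto y_{i,j}$ as the correspondence. The paper's proof is considerably terser --- it simply lists the edges to delete from $N$ and appeals to Figure~\ref{fig:gridExampleTreeEmbedded} --- whereas you spell out the map $f$ and check Observation~\ref{obs:display} explicitly; note that your vertical edges $\{v_{i,n},u_{i+1,n}\}$ and $\{v_{i,0},u_{i+1,0}\}$ are stated in pre-suppression names (after the boundary clean-up they become $\{u_{i,n},u_{i+1,n}\}$ and $\{v_{i,0},v_{i+1,0}\}$, as in the paper), and ``all horizontal edges'' taken literally in the pre-suppression network leaves a few dangling degree-$1$ stubs at the non-switch row ends (e.g.\ $v_{2,n}$), but this is precisely the boundary bookkeeping you already flag as the main obstacle and it disappears once you work in the suppressed $N$.
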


\begin{proof}
\markj{Let $N'$ be the network derived from $N$ by deleting edges of the form $\{v_{i,j},u_{i+1,j}\}$, as well as
edges of the form 
$\{u_{i,n},u_{i+1,n}\}$ for $i$ even and $\{v_{i,0},v_{i+1,0}\}$ for $i$ odd,
and the edges $\{x_{1,1}, v_{2,0}\}$, $\{v_{r-1,0}, y_{r,1}\}$. Observe that $N'$ is a subtree of $N$, and that furthermore $N'$ is a subdivision of $T$, which can be seen by mapping internal vertices $z_{i,j}$ of $T$ to $y_{i,j}$. See Figure~\ref{fig:gridExampleTreeEmbedded}.}



\end{proof}

This completes the construction of $N$ and $T$. The display graph $D(N,T)$ is shown in Figure~\ref{fig:displayGraphExample}.
For convenience, we keep the same names for internal vertices of $N$ and $T$ but it will always be clear from the context which structure we are referring to. Note that after suppressing the vertices $x_{i,j}$, vertices $y_{i,j}$ and $z_{i,j}$ are adjacent in $D(N,T)$.

\begin{lemma}
 The treewidth of $N$, $tw(N)$, is equal to $r$.
\end{lemma}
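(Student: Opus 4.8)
The plan is to prove $tw(N) = r$ by establishing the two inequalities separately, using the grid-like structure of $N$ made explicit in the construction.

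For the upper bound $tw(N) \le r$, I would exhibit an explicit tree decomposition whose underlying tree $\mathbb{T}$ is a path (a so-called path decomposition), built by sweeping column by column. Concretely, for each column $j$ the relevant vertices are $u_{i,j}, v_{i,j}, y_{i,j}$ for $i \in [r]$ together with the leaves $x_{i,j}$; the key observation is that the ``frontier'' separating the columns $\le j$ from the columns $> j$ consists of only one vertex per row (the $v_{i,j}$'s, or after processing the horizontal edges within a column, a set of size $O(r)$). I would order the vertices of $N$ so that within each column we add $v_{i,j-1}$, then $y_{i,j}$, then $u_{i,j}$, then $v_{i,j}$ (interleaving the pendant leaf $x_{i,j}$ into the bag containing $y_{i,j}$), and check conditions (tw1)--(tw3) directly: (tw2) holds because every horizontal edge lives inside a single column-bag and every vertical edge $\{v_{i,j}, u_{i+1,j}\}$ can be placed in a bag that simultaneously contains the frontier of column $j$; (tw3) holds because each vertex appears in a contiguous stretch of bags along the path. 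The maximum bag size will be $r+1$ (the $r$ frontier vertices of one row-cut plus the one extra vertex currently being processed, or a bag holding all $v_{i,j}$ for a fixed $j$ plus possibly one neighbouring vertex), giving width $r$. The degree-1 deletions and degree-2 suppressions performed at the end of the construction only remove or splice vertices and cannot increase treewidth \cite{DBLP:journals/tcs/KelkSW18}, so the bound survives. An alternative, cleaner route for this direction: show $N$ is a minor of (or even a subgraph-subdivision of) the $r \times (n+1)$ grid plus pendant leaves; since subdividing edges and attaching pendant vertices does not change treewidth for graphs of treewidth $\ge 2$, and the $r \times (n+1)$ grid has treewidth $\min(r, n+1) = r$ (as $n > 2r+2 > r$), we get $tw(N) \le r$.

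For the lower bound $tw(N) \ge r$, the natural tool, consistent with the paper's emphasis on brambles, is to exhibit a bramble of order $r+1$ in $N$ and invoke the Seymour--Thomas theorem $tw(N) = br(N) - 1$. The standard $r \times r$-grid bramble adapts: for each row $i \in [r]$ and each ``position'' along that row, take connected subgraphs that are (roughly) a full row together with a vertical segment reaching down to row $i$; more precisely I would take the bramble consisting of, for each $i \in [r]$, the subgraph $R_i$ formed by row $i$ of the grid, and for each pair (column, row) the appropriate ``crosses'', mimicking the classical construction that certifies $tw(k\times k\text{-grid}) = k$. Any two such subgraphs touch because they either share a vertex or are joined by a vertical edge, and any hitting set must contain at least one vertex from each of the $r$ rows in an independent-looking way, forcing size $\ge r+1$. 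Since deleting/suppressing the handful of boundary vertices at the end of the construction does not decrease treewidth either (again \cite{DBLP:journals/tcs/KelkSW18}; and minors only decrease treewidth), $N$ retains treewidth $\ge r$. Combining, $tw(N) = r$.

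The main obstacle I anticipate is bookkeeping around the ``ragged'' boundary of $N$: the construction deletes $u_{1,0}, v_{r,n}$ and suppresses a whole list of degree-2 vertices ($u_{i,0}, v_{i,n}, u_{1,j}, v_{r,j}, v_{1,0}, u_{r,n}$), which makes $N$ not literally a clean grid but a grid with its top row and bottom row partially collapsed and extra adjacencies introduced (e.g. $v_{i,0}$ adjacent to $v_{i+1,0}$). I would handle this by doing all the combinatorial work on the ``full'' pre-suppression graph $N^+$ (where the grid structure is transparent), proving $tw(N^+) = r$ there, and then arguing that $N$ is obtained from $N^+$ by deleting degree-1 vertices and suppressing degree-2 vertices, operations which preserve treewidth provided the treewidth is at least $2$ (true here since $N$ contains cycles and in fact grids as minors). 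This reduces the problem to the genuinely grid-like case and quarantines the tedious boundary analysis into a single invariance remark.
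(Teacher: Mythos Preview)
Your primary upper-bound argument --- an explicit path decomposition sweeping column by column, with bags of size at most $r+1$ --- is exactly what the paper does; the paper simply spells out the full add/delete sequence bag by bag.

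For the lower bound the paper takes a much shorter route than you propose: it observes in one line that the $r \times (n+1)$ grid is a minor of $N$ (delete the leaves $x_{i,j}$, suppress the now degree-$2$ vertices $y_{i,j}$, contract each edge $\{u_{i,j},v_{i,j}\}$), and since that grid has treewidth $r$ this gives $tw(N) \ge r$ immediately. Your bramble construction would also work, but it is overkill for this lemma; the paper reserves the bramble machinery for the genuinely harder bound $tw(D(N,T)) \ge 2r$ in Lemmas~\ref{lem:bramble} and~\ref{lem:brambleOrder}.

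One genuine problem with your ``alternative, cleaner route'' for the upper bound: $N$ (even the pre-suppression version $N^+$) is \emph{not} obtainable from the $r \times (n+1)$ grid by subdividing edges and attaching pendants. Subdivision creates only degree-$2$ vertices and pendant attachment creates degree-$1$ vertices (or raises the degree of an existing vertex that then carries a leaf), yet in $N^+$ both $u_{i,j}$ and $v_{i,j}$ are internal degree-$3$ vertices with no pendant attached, and there are $2r(n+1)$ of them --- twice as many as the grid has vertices. Moreover the ``diagonal'' vertical edges $\{v_{i,j},u_{i+1,j}\}$ prevent $N^+$ from being a subgraph of any $r$-row grid. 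So that shortcut does not go through as stated, and you really do need the explicit decomposition (or some other argument) for $tw(N)\le r$.
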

\begin{proof}
To prove that $tw(N)\leq r$, we give a tree decomposition of $N$. We first ignore the nodes $x_{i,j}$ because those can be added to any tree decomposition of the remaining graph by adding the bags $\{x_{i,j},y_{i,j}\}$ and connecting them to any bag containing $y_{i,j}$ for all $i,j$.

We will now give the tree decomposition (in fact path decomposition\footnote{A path-decomposition is a tree decomposition  in which the underlying tree of the decomposition is a path graph.}) of the remaining graph.

Start with the bag $$\{y_{1,1},v_{2,0},\ldots,v_{r-1,0},y_{r,1}\},$$ which contains exactly $r$ nodes. We now sequentially add one node and delete another to get the path decomposition of the remaining graph. Denote the step of adding node $a$ and then deleting node $d$ by the tuple $(a,d)$. \twu{Note that adding node $a$ results in a bag with $r+1$ nodes while deleting nodes $d$ resultes in another bag with $r$ nodes Then} the following steps bring us to the bag $\{v_{i,1}\}_{i\in[r-1]}\cup\{u_{r,1}\}$:
\[(v_{1,1},y_{1,1}), (y_{2,1},v_{2,0}), (u_{2,1},y_{2,1}), (v_{2,1},u_{2,1}), (y_{3,1},v_{3,0}),\ldots, (v_{r-1,1},u_{r-1,1}),(u_{r,1},y_{r,1}). \]

Now we use a similar sequence of steps to go from the bag $\{v_{i,j}\}_{i\in[r-1]}\cup\{u_{r,j}\}$ to the next $\{v_{i,j+1}\}_{i\in[r-1]}\cup\{u_{r,j+1}\}$:
\[
\begin{aligned}
(y_{1,j+1},v_{1,j}), (v_{1,j+1},y_{1,j+1}), &(y_{2,j+1},v_{2,j}), &(u_{2,j+1},y_{2,j+1}), &(v_{2,j+1},u_{2,j+1}),    \\
                                            &(y_{3,j+1},v_{3,j}), &(u_{3,j+1},y_{3,j+1}), &(v_{3,j+1},u_{3,j+1}),    \\
                                            &                     &\ldots                 &                          \\
                                            &(y_{r-1,j+1},v_{r-1,j}), &(u_{r-1,j+1},y_{r-1,j+1}), &(v_{r-1,j+1},u_{r-1,j+1}),\\
(y_{r,j+1},u_{r,j}),(u_{r,j+1},y_{r,j+1}).
\end{aligned}
\]


Finally, do the following sequence of additions and deletions to the bags starting from $\{v_{i,n-1}\}_{i\in [r-1]}\cup\{u_{r,n-1}\}$:
\[(y_{1,n},v_{1,n-1}),(y_{2,n},v_{2,n-1}),(u_{2,n},y_{2,n}),(y_{3,n},v_{3,n-1}),\cdots,(u_{r-1,n},y_{r-1,n}),(y_{r,n},u_{r,n-1}).\]

Hence we get a path decomposition of $N$ minus the nodes $x_{i,j}$ and their incoming edges. This can be seen by inspecting when nodes are added and deleted. Nodes in the initial bag only get deleted, nodes in the final bag only get added, and all other nodes are first added then deleted, therefore we have the running intersection property. It is also clear that each node is in at least one bag, so we still have to check that each edge is represented in a bag. We consider each type of edge separately, and find a bag where the edge is represented.
\begin{itemize}
\item The edges $\{y_{1,1},v_{2,0}\}$, $\{v_{2,0},v_{3,0}\},\ldots,\{v_{r-2,0},v_{r-1,0}\}$ and $\{v_{r-1,0},y_{r,1}\}$ are in the initial bag;
\item The edges $\{v_{i,0},y_{i,1}\}$ for $i\in\{2,\cdots,r-1\}$ are in the intermediate bag for the addition/deletion $(y_{i,1},v_{i,0})$ in the first part of the sequence;
\item $\{u_{i,j},v_{i,j}\}$ for each $i\in\{2,\cdots, r-1\}$ and $j\in [n-1]$ is in the intermediate bag for the addition/deletion $(v_{i,j},u_{i,j})$;
\item $\{v_{i,j},y_{i,j+1}\}$ for each $i\in[r-1]$ and $j\in [n-1]$ is in the intermediate bag for the addition/deletion $(y_{i,j+1},v_{i,j})$;
\item $\{y_{i,j},u_{i,j}\}$ for each $i\in\{2,\cdots, r\}$ and $j\in [n-1]$ is in the intermediate bag for the addition/deletion $(u_{i,j},y_{i,j})$;
\item $\{y_{1,j},v_{1,j}\}$ for each $j\in [n-1]$ is in the intermediate bag for the addition/deletion $(v_{1,j},y_{1,j})$;
\item $\{u_{r,j},y_{r,j+1}\}$ for each $j\in [n-1]$ is in the intermediate bag for the addition/deletion $(y_{r,j+1},u_{r,j})$;
\item $\{v_{i,j},u_{i+1,j}\}$ for each $i\in[r-1]$ and $j\in [n-1]$ is in the intermediate bag for the addition/deletion $(u_{i+1,j},y_{i+1,j})$, this is clear when we realize that $v_{i,j}$ is added in the addition/deletion step $(v_{i,j},u_{i,j})$ or $(v_{1,j},y_{1,j})$ two steps before $(u_{i+1,j},y_{i+1,j})$;
\item The edges $\{y_{i,n},u_{i,n}\}$ for $i\in\{2,\cdots,r-1\}$ are in the intermediate bag for the addition/deletion $(u_{i,n},y_{i,n})$ in the last part of the sequence;
\item The edges $\{y_{1,n},v_{2,0}\}$, $\{u_{2,n},u_{3,n}\},\ldots,\{u_{r-2,n},u_{r-1,n}\}$ and $\{u_{r-1,n},y_{r,n}\}$ are in the final bag.
\end{itemize}
Hence our proposed tree decomposition is indeed a tree decomposition, and the treewidth of $N$ is at most $r$.


%
\textcolor{black}{For the lower bound, observe that the $r \times (n+1)$ grid is a minor of $N$. This grid has treewidth $r$, so $tw(N) \geq r$.} Combining the upper and lower bound, we conclude that the treewidth of $N$ is exactly $r$.
\end{proof}

\begin{figure}
\centering
 \includegraphics{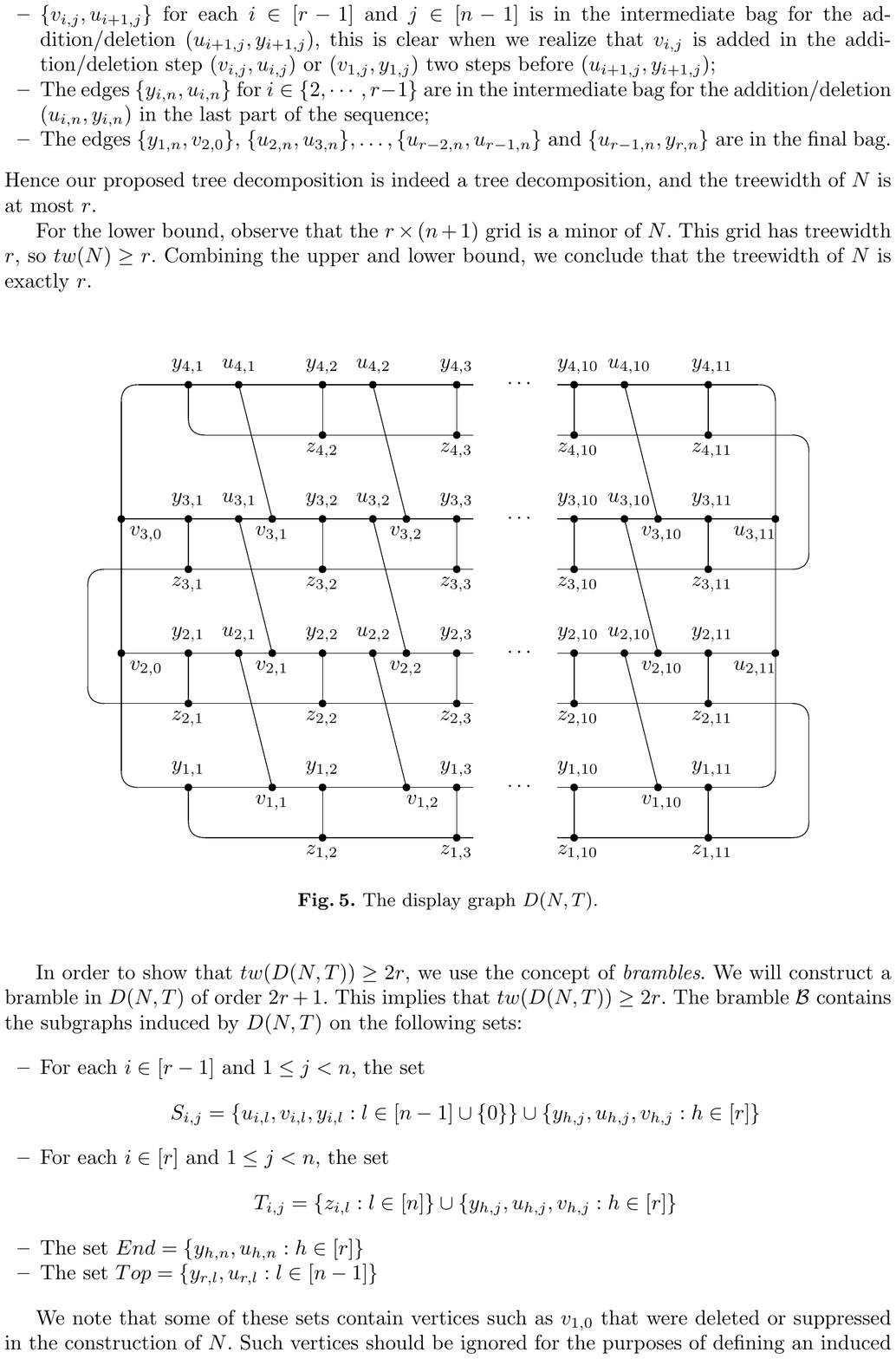}
 \caption{The display graph $D(N,T)$.}
 \label{fig:displayGraphExample}
\end{figure}

In order to  show that $tw(D(N,T))\geq 2r$, we use the concept of \emph{brambles}.
We will construct a bramble in $D(N,T)$ of order $2r+1$. This implies that $tw(D(N,T)) \geq 2r$. The bramble ${\cal B}$ contains the subgraphs induced by $D(N,T)$ on the following sets:

  \begin{itemize}
   \item For each $i \in [r-1]$ and $1 \leq j < n$, the set
  $$S_{i,j} = \{u_{i,l},v_{i,l}, y_{i,l} : l \in [n-1]\cup \{0\}\} \cup \{y_{h,j}, u_{h,j}, v_{h,j}: h \in [r]\}$$

  \item For each $i \in [r]$ and $1 \leq j < n$, the set
 $$T_{i,j} = \{z_{i,l} : l \in [n]\} \cup \{ y_{h,j}, u_{h,j}, v_{h,j}: h \in [r]\}$$

 \item The set $End = \{y_{h,n}, u_{h,n}: h \in [r]\}$

 \item The set $Top = \{y_{r,l}, u_{r,l}: l \in [n-1]\}$

  \end{itemize}

\markj{We note that some of these sets contain vertices such as $v_{1,0}$ that were deleted or suppressed in the construction of $N$. Such vertices should be ignored for the purposes of defining an induced subgraph.}
Intuitively, one may think of the graph $D(N,T)$ as being split up into ``rows" and ``columns", with a ``column" being made up of the vertices $y_{i,j}, u_{i,j},v_{i,j}$ for some fixed $j$ and all values of $i$. A ``row" either consists of all  $y_{i,j}, u_{i,j},v_{i,j}$ for a fixed $i$, or all $z_{i,j}$ for a fixed $i$.
The set $End$ consists of all vertices in the last column, and the set $Top$ consists of all vertices in the top row (except for those already in $End$).
The sets $S_{i,j}$ and $T_{i,j}$ combine all vertices from a given row and column (except those vertices already in $End$).
Note that $End$ is vertex-disjoint from all the other sets; this will be crucial for the lower bound on the order of ${\cal B}$.

  \begin{lemma}\label{lem:bramble}
   ${\cal B}$ is a bramble in $D(N,T)$.
  \end{lemma}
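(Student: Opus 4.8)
The plan is to verify the defining property of a bramble directly: every listed set induces a connected subgraph of $D(N,T)$, and every pair of listed sets touches. Connectivity is essentially a matter of inspecting the grid-like structure. Each ``column" $\{y_{h,j},u_{h,j},v_{h,j}:h\in[r]\}$ is a path in $D(N,T)$ (running $v_{1,j}-u_{2,j}-y_{2,j}-u_{2,j}-v_{2,j}-u_{3,j}-\dots$, using the horizontal and vertical edges, with the caveat noted about suppressed/deleted vertices). Each network ``row" $\{u_{i,l},v_{i,l},y_{i,l}:l\}$ is likewise a path, as is each tree ``row" $\{z_{i,l}:l\in[n]\}$. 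So $S_{i,j}$ is the union of a network row and a network column that share the vertex $y_{i,j}$, hence connected; $T_{i,j}$ is the union of a tree row and a network column, and these meet because $z_{i,j}$ is adjacent to $y_{i,j}$ in $D(N,T)$ (recall $x_{i,j}$ was suppressed), so $T_{i,j}$ is connected; $End$ and $Top$ are clearly paths.

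Next I would check pairwise touching, organised by the type of the two sets. Any two sets of the form $S_{i,j},S_{i',j'}$ touch because they both contain the entire network row $i=r$? No --- more carefully, I would note each $S_{i,j}$ contains column $j$ and each contains row $i$, and any network row and any network column intersect (at $y_{i,j}$ or an adjacent $u,v$); more simply, all the $S_{i,j}$ and $T_{i,j}$ contain vertices of column $j$, and columns $j$ and $j'$ are joined by the row-$r$ path unless that has been removed --- so the cleanest argument is: every column $\{y_{h,j},u_{h,j},v_{h,j}:h\in[r]\}$ contains a vertex adjacent to, or equal to, a vertex of column $j+1$ (via the horizontal edges $\{u_{i,j},v_{i,j}\}$, $\{v_{i,j},y_{i,j+1}\}$ in some row $i$), so consecutive columns touch, and hence (since each $S_{i,j}$, $T_{i,j}$ contains a full column, and touching two sets just needs one adjacency) any two column-containing sets with $|j-j'|\le 1$ touch; for $|j-j'|>1$ I instead use that each such set also contains a full row, and every row spans all columns, so it hits column $j'$. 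For $S$ versus $T$ sets: both contain full column $j$ and $j'$ respectively, handled the same way; also $S_{i,j}$ contains $y_{i,j}$, which is adjacent to $z_{i,j}\in T_{i',j'}$ whenever the tree row $i$ equals $i'$. Finally, $End=\{y_{h,n},u_{h,n}\}$ touches every $S_{i,j}$ and $T_{i,j}$ because column $j<n$ and column $n$ are joined by the row paths (or directly: $u_{i,n-1}$ is adjacent to $v_{i,n-1}$ which is adjacent to $y_{i,n}\in End$); and $Top$, consisting of row-$r$ vertices, meets $End$ (at $y_{r,n}$ or adjacent) and meets every column-containing set since row $r$ spans all columns.

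I would present this as a short case analysis with the single recurring lemma ``any set in ${\cal B}$ containing a full network column, and any other set in ${\cal B}$, share a vertex or an edge," reducing almost everything to the observation that consecutive network columns are joined by horizontal edges and that each network ``row'' and each tree ``row'' visits every column. The main obstacle, and the thing that needs genuine care rather than hand-waving, is the bookkeeping around the vertices that were deleted or suppressed in the construction of $N$ (the boundary vertices like $v_{1,0}$, $u_{1,j}$, $v_{r,j}$, $u_{r,n}$, $v_{r,n}$, and the analogous tree vertices $z_{1,1},z_{r,1}$): I must confirm that after ignoring these phantom vertices each purported path is still connected and each purported adjacency between sets still exists. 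In particular the suppressions create the ``new'' adjacencies noted in the construction ($v_{i,0}$ to $v_{i+1,0}$, $u_{i,n}$ to $u_{i+1,n}$, and $y_{i,j}$ to $z_{i,j}$), and I would lean on exactly these to keep the column-paths and the $S$--$T$ adjacencies intact at the boundary. Everything else is routine inspection of the grid.
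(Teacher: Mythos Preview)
Your approach is correct and matches the paper's: verify connectivity of each set, then do a case analysis on pairs to exhibit a shared vertex or a connecting edge. The paper's execution is simply more direct than your eventual ``rows span columns / consecutive columns touch'' route: for $S_{i,j}$ versus $S_{i',j'}$ (and $S_{i,j}$ versus $T_{i',j'}$) it just names the common vertex $y_{i,j'}$, and for $T_{i,j}$ versus $T_{i',j'}$ it uses the edge $\{z_{i,j'}, y_{i,j'}\}$; your first instinct (``any network row and any network column intersect at $y_{i,j}$'') was already the whole argument before you detoured into consecutive-column reasoning.
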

  \begin{proof}
   Observe that all the sets induce a connected subgraph of $D(N,T)$.
   (In particular, the ``columns" are connected because of the edges $\{v_{i,j},u_{i+1,j}\}$; also note that for $T_{i,j}$ the sets $\{z_{i,l} : l \in [n-1]\}$ and $\{ y_{h,j}, u_{h,j}, v_{h,j}: h \in [r]\}$ are connected by the edge $\{z_{i,j},y_{i,j}\}$.)
   It remains to show  that for each pair of sets in ${\cal B}$ the sets either share a vertex or are joined by an edge with one vertex in each set.

   To see that the sets $Top$ and $End$ touch, observe that $Top$ contains $u_{r,n-1}$ and $End$ contains $y_{r,n}$, and these vertices are connected by an edge.
   To see that $End$ touches the other sets, observe that all other sets contain either the vertex $z_{i,n}$ or $v_{i,n-1}$ for some $i \in [r]$.
  As both of these vertices are adjacent to $y_{i,n}$, it follows that $End$ is touches each of these sets.

  To see that $Top$ touches each of the other sets except for $End$, observe that each of these sets contains $y_{r,j}$ for some $1 \leq j < n$. As $y_{r,j}$ is also in $Top$, these sets touch.

  It remains to consider pairs of sets where each set is $S_{i,j}$ or $T_{i,j}$ for some
 $i \in [r]$ and $j \in [n-1]$.
First consider a set $S_{i,j}$ and a set $T_{i',j'}$.
As both these sets contain $y_{i,j'}$, the sets touch.
Next consider sets $S_{i,j}$ and  $S_{i',j'}$.
As both these sets contain $y_{i,j'}$, the sets touch.
Finally consider the set $T_{i,j}$ and  $T_{i',j'}$.
Then $T_{i,j}$ contains $z_{i,j'}$ and $T_{i',j'}$ contains $y_{i,j'}$. As these vertices are adjacent, the sets touch.
\end{proof}

\begin{lemma}\label{lem:brambleOrder}
The order of ${\cal B}$ is  $2r + 1$.
\end{lemma}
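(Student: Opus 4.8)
The plan is to prove the two inequalities $\mathrm{order}({\cal B}) \geq 2r+1$ and $\mathrm{order}({\cal B}) \leq 2r+1$ separately; the lower bound is the substantive part, the upper bound is an easy check.

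For the \textbf{upper bound}, I would exhibit an explicit hitting set $H$ of size $2r+1$. The natural candidate is a ``column plus a point of $End$'': take the column $\{y_{i,1},u_{i,1},v_{i,1} : i \in [r]\}$ (after discarding any vertices that were deleted/suppressed in the construction of $N$, there are at most $3r$ of them, but we only need enough to meet the sets indexed by column $1$ and $Top$), together with one vertex of $End$, say $y_{r,n}$. Actually cleaner: let $H = \{y_{i,n} : i \in [r]\} \cup \{z_{i,1} : i \in [r]\} \cup \{\text{one extra vertex to hit } Top\}$, or some similar fixed set of size $2r+1$ chosen so that it meets every $S_{i,j}$, every $T_{i,j}$, $End$ and $Top$. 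I would verify membership against the four families: $\{y_{i,n}\}$ hits $End$; for each $S_{i,j}$ it contains $y_{i',j'}$-type vertices and a column's worth of row-$i$ vertices so a row-index-matched vertex lies in $H$; similarly for $T_{i,j}$; and $Top$ is hit by $y_{r,\cdot}$ or $u_{r,\cdot}$ being included. I would tune the exact set $H$ so that the bookkeeping is minimal, but the point is that $2r+1$ vertices suffice, so $\mathrm{order}({\cal B}) \leq 2r+1$.

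For the \textbf{lower bound}, which is the main obstacle, I must show every hitting set $H$ has $|H| \geq 2r+1$. The key structural fact is that $End$ is vertex-disjoint from all other sets, so $H$ must spend at least one vertex on $End$ that is useless for the remaining sets. It then suffices to show that hitting all the $S_{i,j}$ and $T_{i,j}$ and $Top$ (ignoring $End$) requires at least $2r$ vertices. I would argue this by a ``grid-crossing'' / averaging argument: fix a column index $j$ with $1 \le j < n$, say a middle one; the $2r$ sets $S_{1,j}, \dots, S_{r-1,j}$ (well, $r-1$ of them) together with $T_{1,j},\dots,T_{r,j}$ and $Top$ all contain the common column-$j$ block $\{y_{h,j},u_{h,j},v_{h,j}:h\in[r]\}$, so that alone doesn't force size — instead I need to use disjointness across a well-chosen subfamily. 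The cleaner route: restrict attention to a subfamily of pairwise vertex-disjoint sets of size $2r$. Concretely, the $r$ "network rows" $S_{i,j_i}$ for distinct column choices $j_i$, and the $r$ "tree rows" represented via $T_{i,j_i}$ — one needs the row-parts $\{u_{i,l},v_{i,l},y_{i,l}:l\}$ for $i=1,\dots,r-1$, plus $Top$ as the $r$-th network row, to be pairwise disjoint, and the tree rows $\{z_{i,l}:l\in[n]\}$ for $i=1,\dots,r$ to be pairwise disjoint and disjoint from the network rows; that gives $2r$ pairwise-disjoint connected sets, each of which must be hit by a distinct vertex of $H$, forcing $|H \setminus \{\text{the } End \text{ vertex}\}| \geq 2r$. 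Hence $|H| \geq 2r+1$.

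The delicate point I expect to fight with is making the chosen subfamily genuinely pairwise vertex-disjoint despite the fact that each $S_{i,j}$ and $T_{i,j}$ as defined also carries the shared column-$j$ block; so I would instead work with the ``row-only'' parts and argue that hitting the full $S_{i,j}$/$T_{i,j}$ for \emph{all} $j$ forces, for each fixed row $i$, a vertex in that row (since varying $j$ over all of $[n-1]$ and intersecting, the column blocks wash out and only the row part is common), and that these forced row-vertices are distinct across the $2r$ rows. Combined with the separate $End$ vertex this yields $\mathrm{order}({\cal B}) \ge 2r+1$, and together with the upper bound we conclude $\mathrm{order}({\cal B}) = 2r+1$, completing the proof.
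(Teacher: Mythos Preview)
Your overall architecture matches the paper's proof: exhibit a hitting set of size $2r+1$ for the upper bound, and for the lower bound use that $End$ is disjoint from everything else together with $2r$ pairwise-disjoint ``row'' sets (the $r$ tree rows $\{z_{i,l}:l\in[n]\}$, the $r-1$ network rows $\{u_{i,l},v_{i,l},y_{i,l}:l\in[n-1]\cup\{0\}\}$, and $Top$), all of which are also disjoint from $End$. That is exactly the skeleton the paper uses.

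However, there is a genuine gap in your justification of the lower bound. You write that ``varying $j$ over all of $[n-1]$ and intersecting, the column blocks wash out and only the row part is common'', and infer from this that $H$ must contain a vertex in each row. That inference is invalid: the fact that $\bigcap_j S_{i,j}$ equals the row-$i$ part does \emph{not} imply that a hitting set for $\{S_{i,j}:j\}$ meets row $i$. A set $H$ could in principle hit every $S_{i,j}$ purely in the column-$j$ block, never touching row $i$ at all. What rules this out is a counting/pigeonhole step you do not state: the column blocks for distinct $j$ are pairwise disjoint, and there are $n-1>2r$ of them, so a set of size at most $2r$ cannot meet them all. The paper makes exactly this step explicit and in fact organises the whole argument around it: assuming $|H|\le 2r$, it uses $n>2r+2$ to find a single column index $1<j<n$ with $H\cap\{u_{i,j},v_{i,j},y_{i,j},z_{i,j}:i\in[r]\}=\emptyset$, and then, for that fixed $j$, the sets $T_{1,j},\dots,T_{r,j},S_{1,j},\dots,S_{r-1,j},Top$ force one vertex each in the $2r$ disjoint rows, none of which lies in $End$ --- contradiction. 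Adding this pigeonhole step (and the explicit use of $n>2r+2$) closes your gap.

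For the upper bound you never commit to a concrete hitting set; the paper simply takes $\{y_{i,2},z_{i,2}:i\in[r]\}\cup\{y_{1,n}\}$, which is immediate to verify against all four families.
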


\begin{proof} Observe that the \twu{ set $\{y_{i,2}, z_{i,2}: i \in [r]\} \cup \{y_{1,n}\}$ is} a hitting set of size $2r+1$.

To see that any hitting set must have size at least $2 r +1$,
suppose for a contradiction that $H$ is a hitting set for $\cal B$ with $|H| \leq 2r$.
As $n >  2r + 2$, there exists some $1 < j < n$ such that $H$ does not contain $u_{i,j},v_{i,j},y_{i,j}$ or $z_{i,j}$ for any $i \in [r]$.
For each $i \in [r]$, $H$ contains elements from $T_{i,j}$,
from which it follows that $H$ must contain some element from  $\{z_{i,l}: l \in [n]\}$  for each $i \in [r]$.
Similarly as $H$ contains elements from $S_{i,j}$
,  $H$ must contain some element from  $\{u_{i,l},v_{i,l}, y_{i,l} : l \in [n-1] \cup\{0\}\}$ for each $i \in [r-1]$.
In addition, $H$ must contain some element from $Top = \{y_{r,l}, u_{r,l}: l \in [n-1]\}$.

As these sets are 
disjoint and there are $2r$ of them, $H$ must contain exactly one element from each of these sets.
But as each of these sets is disjoint from  $End = \{y_{h,n}, u_{h,n}: h \in [r]\}$, it follows that $H$ contains no element of $End$, a contradiction.
\end{proof}

This shows that the treewidth of the display graph $D(N,T)$ is at least $2r$. From the above three lemmas we have the following:


\begin{theorem}
For any positive integer $r$, there is a network $N$ of treewidth $r$ and a tree $T$ such that $N$ displays $T$ and $tw(D(N,T)) \geq 2r$.
\end{theorem}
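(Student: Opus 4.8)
\section*{Proof proposal}

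\textbf{Approach.} The statement is a packaging of the explicit construction and the three lemmas preceding it, so the plan is to assemble them. Given $r$, pick any $n > 2r+2$ (and, when $r$ is odd, apply the minor modification to the construction already mentioned in the text) and take $N$, $T$ to be the grid-like network and the weaving caterpillar defined above. There are then exactly three things to check: that $N$ displays $T$, that $tw(N)=r$, and that $tw(D(N,T)) \geq 2r$.

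\textbf{Steps in order.} First I would invoke the lemma that $T$ is displayed by $N$: deleting the vertical edges $\{v_{i,j},u_{i+1,j}\}$ together with the specified boundary edges leaves a subtree $N'$ of $N$ which is a subdivision of $T$ under the map $z_{i,j}\mapsto y_{i,j}$, so by definition $N$ displays $T$. Second, I would use the lemma computing $tw(N)$: the explicit path decomposition built by the ``add one node, delete one node'' sweep across the columns certifies $tw(N)\leq r$ (one checks each of the finitely many edge types appears in some bag and that the running-intersection property holds because interior vertices are added once and deleted once), and $tw(N)\geq r$ follows since the $r\times(n+1)$ grid is a minor of $N$ and grids of that shape have treewidth $\min(r,n+1)=r$. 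Third, for the lower bound on $tw(D(N,T))$ I would exhibit the bramble $\mathcal{B}$ consisting of the ``row $\cup$ column'' sets $S_{i,j}$, $T_{i,j}$ together with $End$ and $Top$; Lemma~\ref{lem:bramble} shows these are connected and pairwise touch, so $\mathcal{B}$ is a bramble, and Lemma~\ref{lem:brambleOrder} shows its order is exactly $2r+1$. Applying the Seymour--Thomas theorem $tw(G)=br(G)-1$ gives $tw(D(N,T)) \geq br(D(N,T)) - 1 \geq (2r+1) - 1 = 2r$.

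\textbf{Main obstacle.} The delicate part is the order computation for $\mathcal{B}$, i.e.\ showing no hitting set has size $\leq 2r$. The argument I would run is the one sketched above: since $n>2r+2$ there is an interior column index $j$ entirely avoided by a hypothetical hitting set $H$ of size $\leq 2r$; then meeting each $T_{i,j}$ forces $H$ to intersect the $z$-row $\{z_{i,l}:l\in[n]\}$ for every $i\in[r]$, meeting each $S_{i,j}$ forces $H$ to intersect the $(u,v,y)$-row $\{u_{i,l},v_{i,l},y_{i,l}:l\in[n-1]\cup\{0\}\}$ for every $i\in[r-1]$, and meeting the bramble element $Top$ forces an element of $\{y_{r,l},u_{r,l}:l\in[n-1]\}$; these $2r$ sets are pairwise disjoint, so $H$ spends exactly one element on each and has nothing left over. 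But all $2r$ of these sets are disjoint from $End=\{y_{h,n},u_{h,n}:h\in[r]\}$, so $H$ misses $End$ entirely, contradicting that $H$ is a hitting set. The only subtlety needing care is to confirm that the ``ignore deleted/suppressed vertices'' convention does not break connectivity of any bramble element or the touching relations (the text already flags this), and that the disjointness claims hold once those phantom vertices are removed; these are routine case inspections rather than genuine difficulties. Combining the three lemmas yields the theorem.
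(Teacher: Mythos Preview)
Your proposal is correct and follows essentially the same approach as the paper: the theorem is stated immediately after the construction and the preceding lemmas with only the line ``From the above three lemmas we have the following,'' so the proof is precisely the assembly you describe, combining the display lemma, the $tw(N)=r$ lemma, and the bramble/order lemmas via Seymour--Thomas. Your detailed recap of the hitting-set argument for Lemma~\ref{lem:brambleOrder} also matches the paper's proof of that lemma verbatim in structure.
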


\section{Discussion and conclusions}
\label{sec:discussion}


An obvious open question is whether we can match the theoretical upper and  constructive lower bound on the treewidth of $D(N,T)$ in terms of the treewidth of $N$. This means either finding a tight example of the inequality $tw(D(N,T)) \leq 2tw(N)+1$, \textit{or} improving the upper bound to match the $2tw(N)$ lower bound of the construction from the previous section. It is also natural to explore \emph{empirically} how large the treewidth of $D(N,T)$ is compared to the treewidth of $N$, when $N$ displays $T$. We conjecture that for realistic phylogenetic trees and networks $tw(D(N,T))$ will be much smaller than $2tw(N)$.

As touched upon
in Section \ref{sec:networks} it could additionally be interesting to identify non-trivial examples when $N$ does not display $T$ but $tw(D(N,T)) = tw(N)$ and to give, if possible, a phylogenetic interpretation to this. \sfinal{Phylogenetics has defined many topologically-restricted subclasses of phylogenetic networks, such as \emph{tree-based} networks \cite{francis2018tree}, precisely to prohibit networks (such as that shown in Figure \ref{fig:net}) that are artificially large and complex with respect to the number/location of taxa in the network. Possibly the display relation will behave differently on such restricted subclasses with respect to $tw(D(N,T))$.} In any case, recent advances in treewidth solvers will be useful here (see e.g. \cite{berndt2018}) since display graphs can quickly become quite large. We now understand that, after suppression of degree-2 nodes, display graphs \sfinal{of two phylogenetic trees} are exactly those (biconnected, cubic) graphs of tree arboricity 2; is there any hope of computing treewidth quickly on these graphs? See the related discussion in \cite{DBLP:journals/tcs/KelkSW18}.

Algorithmically, the obvious challenge that (still!) remains is to convert MSOL formulations into practical dynamic programming algorithms running over tree decompositions. This remains  tempting, for the following reason. In \cite{kelk2015} it is reported that display graphs of two trees $T_1$, $T_2$ often have low treewidth compared to even conservative phylogenetic dissimilarity measures on $T_1, T_2$, such as Tree Bisection and Reconnect (TBR) distance, and this makes computation of these measures (paramerized by treewidth of the display graph) attractive. But what about networks - as opposed to display graphs? In phylogenetics it is quite common to construct phylogenetic networks by asking for a network $N$ that simultaneously displays two (or more) trees $T_1, T_2$ and which minimizes $r(N)$; this is the well-studied \emph{hybridization number} problem \cite{sempbordfpt2007,vanIersel20161075}.  In such an $N$, $r(N)$ will be  equal to the TBR-distance of $T_1$ and $T_2$ \cite{van2017unrooted} which, as mentioned earlier, can be large compared to $tw( D(T_1, T_2) )$. The question arises how $tw(N)$ relates to $tw( D(T_1, T_2) )$ and, in particular, whether it is also ``low''. If so, there is some hope that phylogenetic networks arising in practice will also have low treewidth, compared to other phylogenetic measures. More empirical study is needed in this area.

The obvious theoretical shortcoming of this approach is that phylogenetic MSOL formulations are complex and explicit dynamic programs require some effort to write and understand (see e.g. \cite{baste2017efficient}) with relatively high exponential dependency on the treewidth \sfinal{bound}. The UTC formulation in this article nevertheless seems a promising candidate for a ``clean'' explicit dynamic program since it has, by phylogenetic standards, a comparatively straightforward combinatorial structure.

Looking forward we observe that, as phylogenetic networks become more commonplace in computational biology, it is natural to compare networks, rather than trees (see e.g. \cite{francis2018bounds,janssen2018exploring}). In this regard, \emph{network-network} display graphs are certainly worthy of investigation. For example, it is straightforward to prove that if two phylogenetic networks $N_a, N_b$ both display a tree $T$, $tw(D(N_a, N_b)) \leq r(N_a) + r(N_b) + 2$. Now, if $N_a$ and $N_b$ are two distinct optima (i.e. competing hypotheses) produced by an algorithm solving the hybridization number problem for two trees $T_1, T_2$, then $r(N_a)$ and $r(N_b)$ are both equal to the TBR-distance $d$ of $T_1$ and $T_2$ \cite{van2017unrooted}. Hence, $tw(D(N_a, N_b)) \leq 2d + 2$. In particular: the treewidth of the display graph formed from the networks, will be bounded as a function of the TBR-distance of the two original trees.
\snew{Similarly, the proof of Lemma \ref{lem:twbound} goes through essentially unchanged for two networks on the same set of taxa: if $N_2$ displays $N_1$ then $tw(D(N_2, N_1)) \leq 2tw(N_2) + 1$.}

Perhaps such treewidth bounds can help in the development of compact FPT MSOL proofs for determining the dissimilarity of networks. There is quite some potential here. Topological decompositions in phylogenetics (into quartets, triplets, \emph{agreement forests} and so on) can be modelled fairly naturally within MSOL \cite{kelk2015}. Higher-order analogues are emerging for decomposing phylogenetic networks (see e.g.  \cite{huber2017reconstructing}) - and it is plausible that such structures could also be encoded within MSOL.

Finally, stepping away from phylogenetics, the study of display graphs continues to generate interesting new questions for algorithmic graph theory. In particular, the behaviour (and ``phylogenetic meaning'') of (forbidden) minors in display graphs remains a subject where much is still to be learned \cite{fernandez2018compatibility,DBLP:journals/tcs/KelkSW18}. Indeed,
display graphs can be viewed as a special case of a more generic problem. Given a set of graphs and a well-defined protocol for merging them, how do parameters of the merged graph (and topological features such as minors) relate to parameters and features of the constituent graphs?




\section*{Acknowledgements}
Mark Jones and Remie Janssen were supported by Leo van Iersel's Vidi grant (NWO): 639.072.602. Georgios Stamoulis was supported by an NWO TOP 2 grant. Part of the work was supported by CNRS ``Projet international de cooperation
scientifique (PICS)'' grant number 230310 (CoCoAlSeq).

\appendix
\section{Appendix}

\subsection{Unrooted tree compatibility (UTC) is FPT when parameterized by treewidth: a proof via Courcelle's Theorem}
\label{subsec:whatever}
\noindent
This leverages the upper bound on $tw(D(N,T))$ as a function of the treewidth $tw(N)$ of $N$ proven earlier in the paper, see Lemma \ref{lem:twbound}.

The high-level idea of the following MSOL formulation is that, if $N$ displays $T$, then (as discussed in Section \ref{sec:networks}) $N$ contains some subtree $T'$
that is a subdivision of $T$ and which can be ``grown'' into a spanning tree $T''$ of $N$.
Spanning trees of $N$ are precisely those subgraphs obtained by deleting a subset of edges $E'$ from $N$ to make it connected and acyclic. Note that the set of quartets (unrooted phylogenetic trees on subsets of exactly
4 taxa) displayed by $T''$ is identical to those displayed by $T'$, which is identical to those displayed by $T$. (In other words,
subdivision operations, and pendant subtrees without taxa that possibly hang from $T''$, do not induce any extra quartets.)

The core idea underpinning MSOL is to query properties of a graph using universal and existential quantification ranging not just over vertices and edges, but also subsets of these objects. For the benefit of readers not familiar with MSOL we now show how various basic auxiliary predicates can be easily constructed and combined to obtain more powerful predicates.
(The article \cite{kelk2015} gives a more comprehensive inroduction to the use of these techniques in phylogenetics). The MSOL sentence will be queried over the display graph
$D(N,T)$ where we let $V$ be the vertex set of $D(N,T)$ and $E$ its edge set. Here $R^{D}$ is the edge-vertex incidence relation on $D(N,T)$. We let $V_T, V_N, E_T, E_N$ denote
those vertices and edges of $D(N,T)$ which belong to $T, N$ respectively (note that \sfinal{$V_T \cap V_N = X$}). Alongside  $X, V, E$ all this information is available to the MSOL
formulation via its \emph{structure}.

\begin{itemize}
\item test that $Z$ is equal to the union of two sets $P$ and $Q$:
\begin{flalign*}
P \cup Q = Z  := &\forall z ( z \in Z \Rightarrow z \in P \vee z \in Q)&\\
&\wedge \forall z ( z \in P \Rightarrow z \in Z) \wedge \forall z ( z \in Q \Rightarrow z \in Z).
\end{flalign*}

\item  test that $P \cap Q = \emptyset$:
\begin{flalign*}
\mathrm{NoIntersect}(P,Q) := &\forall u \in P( u \not \in Q ).&
\end{flalign*}

\item test that $P \cap Q = \{v\}$:
\begin{flalign*}
\mathrm{Intersect}(P,Q,v) := &(v \in P) \wedge (v \in Q) \wedge \forall u \in P( u \in Q \Rightarrow (u  = v) ).&
\end{flalign*}

\item test if the sets $P$ and $Q$ are a bipartition of $Z$:
\begin{flalign*}
\mathrm{Bipartition}(Z, P, Q) := &(P \cup Q = Z) \wedge \mathrm{NoIntersect}(P,Q).&
\end{flalign*}

\item test if the elements in $\{x_1, x_2, x_3, x_4\}$ are pairwise different:
\begin{flalign*}
\mathrm{allDiff}( x_1, x_2, x_3, x_4 ) := &\bigwedge_{i \neq j \in \{1,2,3,4\}} x_i \neq x_j.&
\end{flalign*}

\item check if the nodes $p$ and $q$ are adjacent:
\begin{flalign*}
\mathrm{adj}(p,q) :=& \exists e \in E ( R^{D}(e,p) \wedge R^{D}(e,q)).&
\end{flalign*}
\end{itemize}

The complex predicate $PAC(Z, x_1, x_2, K)$  \sfinal{(``path avoiding edge cuts?'')} asks: is there a path from $x_1$ to $x_2$ entirely contained inside vertices $Z$ that avoids all the edges $K$? We model this by observing
that this does \emph{not} hold if you can partition $Z$ into two pieces $P$ and $Q$, with $x_1 \in P$ and $x_2 \in Q$, such that the only edges that cross the induced cut (if
any) are in $K$.
\begin{flalign*}
PA&C(Z, x_1, x_2, K) :=\\
&(x_1 = x_2) \vee \neg \exists P, Q \Bigg( \mathrm{Bipartition}(Z,P,Q) \wedge x_1 \in P \wedge x_2 \in Q \\
&\wedge\Big(\forall p, q \Big(p \in P \wedge q \in Q \Rightarrow \neg \mathrm{adj}( p,q) \vee \left(\exists g \in K\left( R^{D}(g, p) \wedge R^{D}(g,q) \right)\right)\Big)\Big)\Bigg)
\end{flalign*}

The following predicate $QAC^{i}$ \sfinal{(``quartet avoiding edge cuts?'')}, where $i \in \{T,N\}$, returns true if and only if $i$ contains an \sfinal{image} of quartet $x_a x_b | x_c x_d$ that is disjoint from the edge cuts $K$. \sfinal{As usual we write $x_a x_b | x_c x_d$ to denote the quartet where the path between $x_a$ and $x_b$ is disjoint from the path between $x_c$ and $x_d$. (The tree $T$ shown in Figure \ref{fig:net}, for example, is the quartet $ab|cd$)}.
\begin{flalign*}
QA&C^{i}(x_a, x_b, x_c,x_d, K) :=\\
&\exists u, v \in V_i \Bigg(\mathrlap{ (u \neq v) \wedge \exists A,B,C,D,P \subseteq V_i \Big( u \in P \wedge v \in P  }\\
& \wedge x_a, u \in A &&\wedge x_b, u \in B &&\wedge x_c, v \in C\\
& \wedge x_d,v \in D && \wedge  \mathrm{Intersect}(A,B,u) &&\wedge \mathrm{Intersect}(A,P,u)\\
& \wedge \mathrm{Intersect}(B,P,u) &&\wedge \mathrm{Intersect}(C,D,v) &&  \wedge \mathrm{Intersect}(C,P,v)\\
& \wedge \mathrm{Intersect}(D,P,v) &&\wedge   \mathrm{NoIntersect}(A,C) &&\wedge \mathrm{NoIntersect}(B,C)\\
& \wedge \mathrm{NoIntersect}(A,D) &&\wedge \mathrm{NoIntersect}(B,D)   &&\wedge PAC(A, u, x_a, K) \\
&\wedge PAC(B, u, x_b, K) &&\wedge PAC(C, v, x_c, K) &&\wedge PAC(D, v, x_d, K)& \\
&&&&&\wedge PAC(P, u, v, K)\Big)\Bigg)&
\end{flalign*}

We need a prediate which asks: is the subgraph induced by vertex subset $Z$, and then with edges $K$ deleted, connected? We model
this as follows: for every pair of vertices $u$ and $v$ in $Z$ a path should exist from $u$ to $v$ completely contained inside
$Z$ and which avoids the edges $K$. Hence,
\begin{eqnarray*}
\mathrm{Connected}(Z, K) & := \forall u, v \in Z ( PAC( Z, u, v, K )).
\end{eqnarray*}

In a similar vein, we need a predicate which asks: is the subgraph induced by vertex subset $Z$, and then with edges
$K$ deleted, \emph{acyclic?}  The idea here is that, if it is not acyclic, there will exist two distinct vertices $u, v \in Z$
such that $u$ can reach $v$ via two distinct, vertex-disjoint paths $P$ and $Q$:
\begin{eqnarray*}
\mathrm{Acyclic}(Z, K) &:=& \neg \exists u, v \in Z\Big(\exists P, Q \subseteq Z\big( u\neq v \wedge  P \cap Q = \{u,v\}  \\
&& \wedge P \neq Q \wedge PAC( P, u, v, K ) \wedge PAC( Q, u, v, K )\big)\Big).
\end{eqnarray*}
\noindent
(The predicate $P \cap Q = \{u,v\}$ is a simple modification of the earlier $\mathrm{Intersect}$ predicate.)

The final formulation is shown as below. The first line asks for a subset $E'$ (representing the edges we delete \sfinal{from $N$} to obtain $T''$), the second line requires that the $N$ part of $D(N,T)$ remains
connected and acyclic after deletion of $E'$ (and thus induces a spanning tree), and from the third line onwards we stipulate that, after deletion of $E'$, the set of quartets that survive
is exactly the same as the set of quartets displayed by $T$. (This is leveraging the well-known
result from phylogenetics that two trees are compatible if and only if they display the same
set of quartets \cite{SempleSteel2003}).  Note that the overall length of the MSOL fragment is fixed i.e. it is not dependent on parameters of the input.
\begin{flalign*}
\exists E' &\mathrlap{\subseteq E_N \Bigg( \mathrm{Connected}(V_N, E') \wedge \mathrm{Acyclic}(V_N, E') }\\
&\mathrlap{\wedge \forall x_1, x_2, x_3, x_4 \in X\Bigg( \mathrm{allDiff}(x_1, x_2, x_3, x_4)  }\\
& \Rightarrow & \Big(&\big(QAC^{T}(x_1, x_2, x_3, x_4, \emptyset)    && \Leftrightarrow QAC^{N}(x_1, x_2, x_3, x_4, E') \big)& \\
&&&  \wedge \big(QAC^{T}(x_1, x_3, x_2, x_4, \emptyset) && \Leftrightarrow QAC^{N}(x_1, x_3, x_2, x_4, E') \big)& \\
&&&  \wedge \big(QAC^{T}(x_1, x_4, x_2, x_3, \emptyset) && \Leftrightarrow QAC^{N}(x_1, x_4, x_2, x_3, E') \big)\Big)&\Bigg)\Bigg).\\
\end{flalign*}

\subsection{MSOL proof for recognizing display graphs}
\label{subsec:msolrecog}


The following MSOL fragment checks whether a cubic, simple graph $G=(V,E)$ is a suppressed display graph. We re-use predicates defined
in the previous section.
\[
\exists V_1, V_2( \mathrm{Bipartition}(V, V_1, V_2) \wedge_{i=1,2} \mathrm{Connected}(V_i, \emptyset)
\wedge_{i=1,2} \mathrm{Acyclic}(V_i, \emptyset) ).
\]

\bibliographystyle{abbrvurl}
\bibliography{JGAA_final}

\end{document}